\newif\ifFULL
\theoremstyle{plain}
\newtheorem{thm}{\protect\theoremname}[section]
\newtheorem{lem}[thm]{Lemma}
\newtheorem{cor}[thm]{Corollary}
\newtheorem{conj}[thm]{Conjecture}
\newenvironment{psketch}{%
  \proof}{\endproof}
\newtheorem{oq}{Open Question}
  \theoremstyle{definition}
  \newtheorem{defn}[thm]{\protect\definitionname}
\newtheorem{remark}[thm]{Remark}
\newcommand{\ignore}[1]{}%
\theoremstyle{plain}
\DeclareMathOperator*{\Lreg}{L}
  \providecommand{\definitionname}{Definition}
\providecommand{\theoremname}{Theorem}
\newcounter{note}
\title{Distributed PCP Theorems for Hardness of Approximation in P}
\author{%
\normalsize
Amir Abboud \thanks{IBM Almaden Research Center,
amir.abboud@ibm.com. Research done while at Stanford University and supported by the grants of Virginia Vassilevska Williams: NSF Grants CCF-1417238, CCF-1528078 and CCF-1514339, and BSF Grant BSF:2012338.}
\and
\normalsize
Aviad Rubinstein \thanks{UC Berkeley,
aviad@eecs.berkeley.edu. This research was supported by Microsoft Research PhD Fellowship. It was also supported in part by NSF grant CCF-1408635 and by Templeton Foundation grant 3966.}
\and
\normalsize
Ryan Williams \thanks{MIT, rrw@mit.edu. Supported by an NSF CAREER grant (CCF-1741615).}
}
\date{}
\title{Distributed PCP Theorems for Hardness of Approximation in P \\ {\Large (Extended Abstract)}}
\author{\IEEEauthorblockN{Amir Abboud}
\IEEEauthorblockA{Stanford University\\
Computer Science Department\\
Palo Alto, CA, USA\\
abboud@cs.stanford.edu}
\and
\IEEEauthorblockN{Aviad Rubinstein}
\IEEEauthorblockA{UC Berkeley\\
EECS\\
Berkeley, CA, USA\\
aviad@eecs.berkeley.edu}
\and
\IEEEauthorblockN{Ryan Williams}
\IEEEauthorblockA{MIT\\
CSAIL and EECS\\
Cambridge, MA, USA\\
rrw@mit.edu}
}
\begin{document}

\maketitle

\begin{abstract}
We present a new \emph{distributed} model of probabilistically checkable proofs (PCP). A satisfying assignment $x \in \{0,1\}^n$ to a CNF formula $\varphi$ is shared between two parties, where Alice knows $x_1, \dots, x_{n/2}$, Bob knows $x_{n/2+1},\dots,x_n$, and both parties know $\varphi$. The goal is to have Alice and Bob jointly write a PCP that $x$ satisfies $\varphi$, while exchanging little or no information. 
Unfortunately, this model as-is does not allow for nontrivial query complexity. Instead, we focus on a \emph{non-deterministic} variant, where the players are helped by Merlin, a third party who knows all of $x$.

Using our framework, we obtain, for the first time, PCP-like reductions from the Strong Exponential Time Hypothesis (SETH) to approximation problems in \P.
In particular, under SETH we show that 
there are no truly-subquadratic approximation algorithms for 
{\bf Bichromatic Maximum Inner Product} over $\{0,1\}$-vectors,
{\bf Bichromatic LCS Closest Pair} over permutations, 
{\bf Approximate Regular Expression Matching}, and 
{\bf Diameter in Product Metric}.
All our inapproximability factors are nearly-tight. In particular, for the first two problems we obtain nearly-polynomial factors of $2^{(\log n)^{1-o(1)}}$;
only $(1+o(1))$-factor lower bounds (under SETH) were known before.
\end{abstract}

\ifFULL

\clearpage
\thispagestyle{empty}


\setcounter{page}{1}

\else

\begin{IEEEkeywords}
fine-grained complexity; similarity search; strong exponential-time hypothesis; closest pair; longest common subsequence; inapproximability
\end{IEEEkeywords}

%
\IEEEpeerreviewmaketitle

\fi

\section{Introduction}\label{sec:intro}
\def \eps{\varepsilon}

Fine-Grained Complexity classifies the time complexity of fundamental problems under popular conjectures, the most productive of which has been the Strong Exponential Time Hypothesis%
\footnote{SETH is a pessimistic version of \P $\neq$ \NP, stating that for every $\eps > 0$ there is a $k$ such that $k$-SAT cannot be solved in $O((2-\eps)^n)$ time.} (SETH). The list of ``SETH-Hard" problems is long, including central problems in pattern matching and bioinformatics \cite{AVW14,BI15,BI16}, graph algorithms \cite{RV13,GIKW17}, dynamic data structures \cite{AV14}, parameterized complexity and exact algorithms \cite{PW10-faster_SAT,LMS11,Cygan+16}, computational geometry \cite{Bring14}, time-series analysis \cite{ABV15a,BK15}, and even economics \cite{MPS16} (a longer list can be found in \cite{Vass15}).

For most problems in the above references, there are natural and meaningful approximate versions, and for most of them the time complexity is wide open (a notable exception is~\cite{RV13}).
Perhaps the most important and challenging open question in the field of Fine-Grained Complexity is whether a framework for \emph{hardness of approximation in \P} is possible.
To appreciate the gaps in our knowledge regarding inapproximability, 
consider the following fundamental problem from the realms of {similarity search} and statistics, of finding the most \emph{correlated} pair in a dataset.

\begin{defn}[The {\sc Bichromatic%
\footnote{See the end of this section for a discussion of ``bichromatic'' vs ``monochromatic'' closest pair problems.} 
Max Inner Product} Problem ({\sc Max-IP})]
Given two sets $A,B$, each of $N$ binary vectors in $\{0,1\}^d$, return a pair $(a,b) \in A \times B$ that maximizes the inner product $a\cdot b$.
\end{defn}

Thinking of the vectors as subsets of $[d]$, this {\sc Max-IP} problem asks to find the pair with largest overlap, a natural similarity measure. 
A na\"{i}ve algorithm solves the problem in $O(N^2d)$ time, 
and one of the most-cited fine-grained results is a SETH lower bound for this problem.
Assuming SETH, we cannot solve {\sc Max-IP} (exactly) in $N^{2-\eps}\cdot 2^{o(d)}$ time, for any $\eps>0$ \cite{Wil05}.

This lower bound is hardly pleasing when one of the most vibrant areas of Algorithms\footnote{In SODA'17, two entire sessions were dedicated to algorithms for similarity search.} 
is concerned with designing \emph{approximate} but \emph{near-linear} time solutions for  such similarity search problems. For example, the original motivation of the 
celebrated MinHash algorithm was to solve the indexing version of this problem~\cite{Broder97,Broder+97}, and one of the first implementations was at the core of the AltaVista search engine. 
The problem has important applications all across Computer Science, most notably in Machine Learning, databases, and information retrieval, e.g. \cite{IM98,AI06,RR07,RG12,SL14,AINR14,AILRS15,AR15,NS15,SL15,Valiant15,AW15,KarKK16,pods16,TG16,CP16,Chris17}.

{\sc Max-IP} seems to be more challenging than closely related problems where similarity is defined as small Euclidean distance rather than large inner product. 
For the latter, we can get near-linear $O(N^{1+\eps})$ time algorithms, for all $\eps>0$, at the cost of some constant $f(\eps)$ error that depends on $\eps$ \cite{IM98,AI06,AINR14,AR15}.
In contrast, for {\sc Max-IP}, even for a moderately subquadratic running time of $O(N^{2-\eps})$, all known algorithms suffer from \emph{polynomial} $N^{g(\eps)}$ approximation factors.

Meanwhile, the SETH lower bound for {\sc Max-IP} was only slightly improved by Ahle, Pagh, Razenshteyn, and Silvestri \cite{pods16} to rule out $1+o(1)$ approximations, leaving a huge gap between the not-even-$1.001$ lower bound and the polynomial upper bound.

\begin{oq}
\label{oq1}
{Is there an $O(N^{1+\eps})$-time algorithm for computing an $f(\eps)$-approximation to {\sc Bichromatic Max Inner Product} over binary vectors?}
\end{oq}

This is just one of the many significant open questions that highlight our inability to prove hardness of approximation in \P, and pour cold water on the excitement 
from the successes of Fine-Grained Complexity. 
It is natural to try to adapt tools from the \NP-Hardness-of-approximation framework (namely, the celebrated \emph{PCP Theorem}) to \P. Unfortunately, when starting from SETH, almost everything in the existing theory of PCPs breaks down. Whether PCP-like theorems for Fine-Grained Complexity are possible, and what they could look like, are fascinating open questions. 

Our main result is the first SETH-based PCP-like theorem, from which several strong hardness of approximation in \P~results follow.
We identify a canonical problem 
that is hard to approximate, and further gadget-reductions allow us to prove  
SETH-based inapproximability results for basic problems such as Subset Queries, Closest Pair under the Longest Common Subsequence similarity measure, and Furthest Pair (Diameter) in product metrics.
In particular, assuming SETH, we negatively resolve Open Question~\ref{oq1} in a very strong way, proving an almost tight lower bound for {\sc Max-IP}.

\subsection{PCP-like Theorems for Fine-Grained Complexity}

The following meta-structure is common to most SETH-based
reductions: given a CNF $\varphi$, construct $N=O\left(2^{\frac{n}{2}}\right)$ gadgets, one for each assignment to the first/last $n/2$ variables, and embed those gadgets into some problem $A$. The embedding is designed so that if $A$ can be solved in $O\left(N^{2-\varepsilon}\right)=O\left(2^{\left(1-\frac{\varepsilon}{2}\right)n}\right)$
time, a satisfying assignment for $\varphi$ can be efficiently recovered
from the solution, contradicting SETH.

The most obvious barrier to proving fine-grained hardness of approximation
is the lack of an appropriate PCP theorem. Given a 3-SAT formula $\varphi$,
testing that an assignment $x\in\left\{ 0,1\right\} ^{n}$ satisfies
$\varphi$ requires reading all $n$ bits of $x$. The PCP Theorem~\cite{AS98-PCP, ALMSS98-PCP}, shows how to transform $x\in\left\{ 0,1\right\} ^{n}$
into a PCP ({\em probabilistically checkable proof}) $\pi=\pi\left(\varphi,x\right)$,
which can be tested by a probabilistic verifier who only reads a few bits from $\pi$. This is the starting point for almost all proofs of
\NP-hardness of approximation. The main obstacle in using PCPs for
fine-grained hardness of approximation is that all known PCPs incur
a blowup in the size proof: $\pi\left(\varphi,x\right)$ requires
$n'\gg n$ bits. The most efficient known PCP, due to Dinur~\cite{Dinur07-PCP},
incurs a polylogarithmic blowup ($n'=n\cdot\polylog(n)$), and obtaining
a PCP with a constant blowup is a major open problem (e.g.~\cite{BKKMS16-linear_PCP, Dinur16-gap_ETH}). However, note that even if we had a fantastic PCP with only $n'=10n$, a reduction of size $N'=2^{\frac{n'}{2}}=2^{5n}$ does not imply any hardness at all.
Our goal is to overcome this barrier:

\begin{oq}
\label{oq2}
{Is there a PCP-like theorem for fine-grained complexity?}
\end{oq}

\subsection*{Distributed PCPs}

Our starting point is that of error-correcting codes, a fundamental building block of PCPs. Suppose that Alice and Bob want to encode a message $m=\left(\alpha;\beta\right)\in\left\{ 0,1\right\} ^{n}$
in a distributed fashion. Neither Alice nor Bob knows the entire message:
Alice knows the first half ($\alpha\in\left\{ 0,1\right\} ^{\frac{n}{2}}$),
and Bob knows the second half ($\beta\in\left\{ 0,1\right\} ^{\frac{n}{2}}$).
Alice can locally compute an encoding $E'\left(\alpha\right)$
of her half, and Bob locally computes an encoding $E'\left(\beta\right)$
of his. 
Then the concatenation of Alice's and Bob's strings, 
$E\left(m\right)=\left(E'\left(\alpha\right);E'\left(\beta\right)\right)$,
is an error-correcting encoding of $m$. 

Now let us return to {\em distributed PCPs}. Alice and Bob share
a $k$-SAT formula%
 \footnote{In the formulation of SETH, $k$ is a ``sufficiently large constant''%
\ifFULL 
 (see Section~\ref{sec:prelim} for definition).
\else
.
\fi
However, for the purposes of our discussion here it suffices to think of $k=3$.}
 $\varphi$. 
Alice has an assignment $\alpha\in\left\{ 0,1\right\} ^{\frac{n}{2}}$
to the first half of the variables, and Bob has an assignment $\beta\in\left\{ 0,1\right\} ^{\frac{n}{2}}$
to the second half. We want a protocol where Alice locally computes
a string $\pi'\left(\alpha\right)\in\left\{ 0,1\right\} ^{n'}$,
Bob locally computes $\pi'\left(\beta\right)\in\left\{ 0,1\right\} ^{n'}$,
and together $\pi\left(\alpha;\beta\right)=\left(\pi'\left(\alpha\right),\pi'\left(\beta\right)\right)$
is a valid probabilistically checkable proof that $x=\left(\alpha,\beta\right)$
satisfies $\varphi$. That is, a probabilistic verifier can read a
constant number of bits from $\left(\pi'\left(\alpha\right),\pi'\left(\beta\right)\right)$
and decide (with success probability at least $2/3$) whether $\left(\alpha,\beta\right)$
satisfies $\varphi$.

It is significant to note that {\bf if distributed PCPs can be constructed, then very strong reductions for fine-grained hardness of approximation follow}, completely overcoming the barrier for fine-grained PCPs outlined above. The reason is that
we can still construct $N=O\left(2^{\frac{n}{2}}\right)$ gadgets, one for
each half assignment $\alpha,\beta\in\left\{ 0,1\right\} ^{\frac{n}{2}}$,
where the gadget for $\alpha$ also encodes $\pi'\left(\alpha\right)$.
The blowup of the PCP only affects the size of each gadget, which
is negligible compared to the number of gadgets. In fact, this technique
would be so powerful, that we could reduce SETH to problems like approximate
$\ell_{2}$-Nearest Neighbor, where the existing sub-quadratic approximation
algorithms (e.g. \cite{AR15}) would falsify SETH!

Alas, distributed PCPs are {\em unconditionally impossible} (even for $2$-SAT)
by a simple reduction from Set Disjointness:
\begin{thm}
[Reingold~\cite{Reingold17-communication}; informal]
Distributed PCPs are impossible.\end{thm}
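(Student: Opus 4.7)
The plan is to derive a contradiction with the randomized communication complexity lower bound for Set Disjointness by using any hypothetical distributed PCP scheme as an extremely communication-efficient protocol. The starting point is that Set Disjointness is naturally captured by $2$-SAT: fix the formula $\varphi_n = \bigwedge_{i=1}^{n/2}(\neg x_i \vee \neg y_i)$ over Alice's variables $x_1,\ldots,x_{n/2}$ and Bob's variables $y_1,\ldots,y_{n/2}$. If Alice's and Bob's inputs $\alpha,\beta \in \{0,1\}^{n/2}$ are interpreted as the obvious assignment, then $(\alpha,\beta)$ satisfies $\varphi_n$ if and only if $\alpha \cap \beta = \emptyset$. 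Since $\varphi_n$ depends only on $n$, it is common knowledge, which is exactly the setup of the distributed PCP model.

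Assume for contradiction the existence of a distributed PCP for 2-SAT with constant query complexity $q$, completeness at least $2/3$, and soundness at most $1/3$. The reduction to a public-coin communication protocol for Set Disjointness is then: Alice locally computes $\pi'(\alpha)$ and Bob locally computes $\pi'(\beta)$, using zero communication. They sample the verifier's randomness $r$ via shared coins, and since both parties already know $\varphi_n$ and $r$, each party knows which bits of $\pi = (\pi'(\alpha),\pi'(\beta))$ the verifier will query. Whenever a query falls into the other party's half of the proof, the owner simply sends the queried bit. This costs $O(q) = O(1)$ bits in total; even adaptive query patterns can be handled in $q$ sequential rounds.

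By completeness the protocol accepts when $\alpha \cap \beta = \emptyset$ with probability at least $2/3$, and by soundness it accepts when $\alpha \cap \beta \neq \emptyset$ with probability at most $1/3$. Standard amplification then yields a bounded-error public-coin protocol for Set Disjointness with $O(1)$ communication, contradicting the $\Omega(n)$ randomized lower bound of Kalyanasundaram--Schnitger and Razborov. The only step deserving care is the application of PCP soundness to the \emph{particular} proof $(\pi'(\alpha),\pi'(\beta))$ produced by the fixed honest local encoders, rather than to a worst-case adversarial proof; but because the standard PCP definition quantifies soundness over \emph{all} purported proofs, this comes for free, and the reduction goes through unconditionally.
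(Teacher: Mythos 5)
Your argument is the same as the paper's: reduce Set Disjointness to the 2-SAT formula $\bigwedge_i(\neg x_i\vee\neg y_i)$, have Alice and Bob simulate the PCP verifier over shared randomness by exchanging only the queried bits, and invoke the $\Omega(n)$ randomized communication lower bound for disjointness. The extra elaborations you add (handling adaptive queries round-by-round, amplification, and the remark that soundness over all proofs covers the honest encoders' proof) are correct but are exactly the details the paper's proof sketch elides.
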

\begin{psketch} Consider the 2-SAT formula 
\ifFULL
$$\varphi\triangleq\bigwedge_{i=1}^{n/2}\left(\neg\alpha_{i}\vee\neg\beta_{i}\right).$$
\else
$\varphi\triangleq\bigwedge_{i=1}^{n/2}\left(\neg\alpha_{i}\vee\neg\beta_{i}\right)$.
\fi
This $\varphi$ is satisfied by assignment $\left(\alpha;\beta\right)$
iff the vectors $\alpha,\beta\in\left\{ 0,1\right\} ^{\frac{n}{2}}$ are disjoint. If a PCP verifier can decide whether $\left(\alpha;\beta\right)$
satisfies $\varphi$ by a constant number of queries to $\left(\pi'\left(\alpha\right),\pi'\left(\beta\right)\right)$,
then Alice and Bob can simulate the PCP verifier to decide whether
their vectors are disjoint, while communicating only a constant number
of bits (the values read by the PCP verifier). This contradicts the randomized communication complexity lower bounds of $\Omega(n)$ for set disjointness~\cite{KS92-communication, Razborov92-disjointness, BJKS04-communication}.
\end{psketch}

Note that the proof shows that even distributed PCPs with $o(n)$ queries are impossible.

\subsubsection*{Distributed and non-deterministic PCPs}

As noted above, set disjointness is very hard for randomized communication, and hard even for non-deterministic communication~\cite{KKN95-disjointness}. But Aaronson and Wigderson~\cite{AW09-algebrization} showed that set disjointness does have $\tilde{O}\left(\sqrt{n}\right)$ Merlin-Arthur ($\MA$) communication complexity.
In particular, they construct a simple protocol where the standard
Bob and an untrusted Merlin (who can see both sets of Alice and Bob)  
each send Alice a message of length $\tilde{O}\left(\sqrt{n}\right)$. If the sets are disjoint, Merlin can convince Alice to accept; if they are not, Alice will reject
with high probability regardless of Merlin's message. 

Our second main insight in this paper is this: for problems where
the reduction from SETH allows for an efficient OR gadget, we can
enumerate over all possible messages from Merlin and Bob%
\footnote{In fact, enumerating over Merlin's possible messages turns out to be easy to implement in the reductions; the main bottleneck is the communication with Bob.}. Thus we incur only a subexponential blowup%
\footnote{Subexponential in $n$ (the number of $k$-SAT variables), which implies sub{\em polynomial} in $N \approx 2^{n/2}$.} 
in the reduction size, while overcoming the communication barrier. Indeed, the construction in our PCP-like
theorem \ifFULL
(Theorem \ref{thm:PCP})
\fi
 can be interpreted as implementing a variant of
Aaronson and Wigderson's $\MA$ communication protocol. The resulting
PCP construction is {\em distributed} (in the sense described above) and {\em non-deterministic} (in the sense that Alice receives sublinear advice from Merlin).

It can be instructive to view our distributed PCP model as a 4-party (computationally-efficient) communication problem. Merlin wants to convince Alice, Bob, and Veronica (the verifier) that Alice and Bob jointly hold a satisfying assignment to a publicly-known formula.
Merlin sees everything except the outcome of random coin tosses, but he can only send $o(n)$ bits to only Alice. Alice and Bob each know half of the (allegedly) satisfying assignment, and each of them must (deterministically) send a (possibly longer) message to Veronica. Finally, Veronica tosses coins and is restricted to reading only $o(n)$ bits from Alice's and Bob's messages, after which she must output Accept/Reject.

Patrascu and Williams~\cite{PW10-faster_SAT} asked whether it
is possible to use Aaronson and Wigderson's $\MA$ protocol for Set Disjointness to obtain better algorithms for satisfiability. 
Taking an optimistic twist, our results in this paper may suggest this is indeed possible: if any of several important and simple problems admit efficient approximation
algorithms, then faster algorithms for (exact) satisfiability may be obtained
via Aaronson and Wigderson's $\MA$ protocol.

\subsection{Our results}

\ifFULL
Our distributed and non-deterministic PCP theorem is formalized as Theorem~\ref{thm:PCP}.
Since our main interest is proving hardness-of-approximation results, in Section~\ref{sec:main} we abstract the prover-verifier formulation by reducing our PCP to an Orthogonal-Vectors-like problem which we call 
{\sc PCP-Vectors}.
The hardness of {\sc PCP-Vectors} is formalized as Theorem~\ref{thm:main}. {\sc PCP-Vectors} turns out to be an excellent starting point for many results, yielding easy reductions for fundamental problems and giving essentially tight inapproximability bounds. Let us exhibit what we think are the most interesting ones.
\else
Our distributed and non-deterministic PCP theorem is formalized and proved in the full version.
Since our main interest is proving hardness-of-approximation results, we abstract the prover-verifier formulation by reducing our PCP to an Orthogonal-Vectors-like problem which we call {\sc PCP-Vectors} (see below).
{\sc PCP-Vectors} turns out to be an excellent starting point for many results, yielding easy reductions for fundamental problems and giving essentially tight inapproximability bounds. We begin with the description of {\sc PCP-Vectors}, and then exhibit what we think are the most interesting applications.

\paragraph{PCP-Vectors}
We introduce an intermediate problem which we call {\sc PCP Vectors}.
The purpose of introducing this problem is to abstract out the prover-verifier formulation
before proving hardness of approximation in \P, 
very much like $\NP$-hardness of approximation reductions start from gap-3-SAT or {\sc Label Cover}.

\begin{defn}
[{\sc PCP-Vectors}] The input to this problem consists of two sets
of vectors $A\subset\Sigma^{L\times K}$ and $B\subset\Sigma^{L}$,
The goal is to find vectors $a\in A$ and $b\in B$ that maximize
\begin{gather}\label{eq:score1}
s\left(a,b\right)\triangleq \Pr_{\ell \in L}\left[\bigvee_{k \in K}  \left(a_{\ell,k} = b_{\ell}\right)\right].
\end{gather}
\end{defn}

\begin{thm}
\label{thm:main}Let $\varepsilon>0$ be any constant, and
let $\left(A,B\right)$ be an instance of {\sc PCP-Vectors} with
$N$ vectors and parameters $|L|,|K|,|\Sigma|=N^{o\left(1\right)}$.
Then, assuming SETH, $O\left(N^{2-\varepsilon}\right)$-time
algorithms cannot distinguish between:
\begin{itemize}
\item (Completeness) there exist $a^{*},b^{*}$ such that $s\left(a^{*},b^{*}\right)=1$;
and
\item (Soundness) for every $a\in A,b\in B$, we have $s\left(a,b\right) \leq 1 /2^{(\log N)^{1-o(1)}}$.
\end{itemize}
\end{thm}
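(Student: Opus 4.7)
The plan is to embed an Aaronson--Wigderson style Merlin--Arthur protocol for verifying $\varphi$-satisfiability inside the PCP-Vectors template. I would begin with a $k$-SAT instance on $n$ variables, apply the Sparsification Lemma to reduce to $m = O(n)$ clauses, and split the variables into Alice's half $\alpha\in\{0,1\}^{n/2}$ and Bob's half $\beta\in\{0,1\}^{n/2}$. Associate to each half-assignment its standard $\{0,1\}^m$ indicator of clauses it fails to satisfy ($x_\alpha$ for Alice, $y_\beta$ for Bob). Then $(\alpha,\beta)\models\varphi$ iff $\langle x_\alpha, y_\beta\rangle = 0$, and there are $N = 2^{n/2}$ half-assignments per side.

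Next, arithmetize over $\mathbb{F}_p$ for a prime $p>m$ with $\log p = (\log N)^{1-o(1)}$, identify $[m]\cong[\sqrt m]^2$, and let $\tilde x_\alpha,\tilde y_\beta$ be the bivariate extensions of individual degree $\sqrt m - 1$. In AW, Merlin sends a degree-$2(\sqrt m -1)$ univariate polynomial $g$; Alice locally verifies $\sum_{b\in[\sqrt m]} g(b)=0$, then at a random $r\in\mathbb{F}_p$ verifies $g(r)=\sum_a \tilde x_\alpha(a,r)\tilde y_\beta(a,r)$ using the column $(\tilde y_\beta(a,r))_a$ supplied by Bob. In PCP-Vectors I would take $\ell=r$, let $k$ index the possible polynomials $g$ together with a short auxiliary hint used to deterministically pin down Alice's guess of Bob's column, and choose $\Sigma$ large enough to store such a column. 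Bob sets $b_\ell=(\tilde y_\beta(a,r))_a$. Alice, whenever Merlin's $g$ passes her local sum-check, sets $a_{\ell,k}$ to a canonical vector $v\in\mathbb{F}_p^{\sqrt m}$ computed from $(\alpha,r,g)$ and the auxiliary hint, designed so that $v=(\tilde y_\beta(a,r))_a$ exactly when the AW identity holds and Merlin is honest.

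For completeness, in the satisfying case the honest $g^\ast$ together with the honest hint gives $a_{\ell,k^\ast}=b_\ell$ for every $\ell$, so $s(a,b)=1$. For soundness, when $(\alpha,\beta)\not\models\varphi$ the integer inner product is a nonzero value in $[1,m]$, hence nonzero modulo $p$; combining Schwartz--Zippel with the AW analysis bounds the per-fixed-$k$ probability that Alice accepts and matches Bob by $O(\sqrt m/p)$. Parameters are chosen so that $|L|,|K|,|\Sigma|$ all have logarithm $o(n)$, i.e., each is $N^{o(1)}$, while the gap is at most $1/2^{(\log N)^{1-o(1)}}$. A truly-subquadratic algorithm for PCP-Vectors would then solve $k$-SAT in $O(2^{(1-\Omega(\varepsilon))n})$ time, contradicting SETH.

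The main obstacle is handling the soundness once we enumerate over all $|K|$ Merlin messages: a crude union bound $|K|\cdot O(\sqrt m/p)$ is much too lossy, since $|K|$ is a large power of $p$. The critical technical point is that the $\bigvee_k$ inside $\Pr_\ell$ effectively allows an adaptive Merlin who chooses $k$ after seeing $\ell$, so standard non-adaptive AW soundness cannot be invoked as a black box. Overcoming this will require designing the canonical prediction $v$ so that the set of $(k,\ell)$ pairs at which Alice's $v$ coincides with Bob's column carves out a low-dimensional algebraic variety in the $(k,\ell)$ space, rather than a generic fraction; an algebraic counting argument over this variety--rather than a union bound--should then yield the $2^{-(\log N)^{1-o(1)}}$ gap while keeping all of $|L|,|K|,|\Sigma|$ at $N^{o(1)}$. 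Balancing $\sqrt m$, $\log p$, and the size of the auxiliary hint so that completeness is exactly $1$ and soundness survives this adaptive adversary is where I expect the bulk of the work to concentrate.
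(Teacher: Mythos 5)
There is a genuine gap, and you've put your finger on it yourself: you place Merlin's messages inside the $K$-index of Alice's PCP-Vectors tensor, which forces a union bound over $|K|$ possible Merlin strategies (or, equivalently, lets Merlin pick his message \emph{after} seeing $\ell$), and then you try to rescue soundness with an algebraic counting argument over a variety in $(k,\ell)$-space. The paper never runs into this problem because it indexes things the other way around: Merlin's message $\mu$ lives at the \emph{vector level} (Alice produces one vector $a^{\alpha,\mu}$ for each pair $(\alpha,\mu)$, so $|A| = 2^{n/2}\cdot 2^{o(n)}$), while $K$ enumerates all possible messages \emph{Bob} could send. With this indexing, the soundness analysis is for a single fixed $(\alpha,\mu,\beta)$ at a time, and the per-$\ell$ test is exactly one non-adaptive run of the MA protocol with Merlin's message fixed in advance. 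No union bound over Merlin's choices is needed within a single pair, and there is no adaptivity to fight.

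The second missing piece is how the $\bigvee_{k\in K}$ is tamed. You try to have Alice \emph{predict} Bob's column via a ``canonical vector $v$'' derived from $(\alpha,r,g)$ and a hint --- but Alice cannot do this without communication. The paper instead lets Alice write an \emph{acceptance list}: she sets $a^{\alpha,\mu}_{\ell,k} \triangleq k$ if she would accept Bob's message $k$ under randomness $\ell$ with Merlin's message $\mu$, and $a^{\alpha,\mu}_{\ell,k} \triangleq \perp$ otherwise; Bob sets $b^\beta_\ell$ to his actual message. Then $a^{\alpha,\mu}_{\ell,k} = b^\beta_\ell$ forces $k = b^\beta_\ell$, so the disjunction collapses to the single term $k = b^\beta_\ell$ and simply tests whether Alice accepts Bob's \emph{actual} message. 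The $\bigvee_k$ therefore grants the adversary no extra power, and $s(a,b)$ equals exactly the verifier's acceptance probability. Finally, the $2^{-(\log N)^{1-o(1)}}$ gap is obtained not by a delicate variety count but by a straightforward soundness amplification of the base protocol (repeat it $n/T^2$ times, or equivalently pass to an extension field), keeping Merlin's message length, Bob's message length, and the number of coins all $o(n)$, hence $|L|,|K|,|\Sigma| = N^{o(1)}$. Your arithmetization and use of Schwartz--Zippel for the base round are fine; the bulk of the difficulty you anticipate evaporates once the roles of $\mu$ and $k$ are swapped and Alice's symbol is set to the candidate $k$ itself rather than a predicted column.
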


\fi

\paragraph{Bichromatic Max Inner Product}
Our first application is a strong resolution of Open Question~\ref{oq1}, under SETH.
Not only is an $O(1)$-factor approximation impossible in $O(N^{1+\eps})$ time, but we must pay a near-polynomial $2^{(\log{N})^{1-o(1)}}$ approximation factor if we do not spend nearly-quadratic $N^{2-o(1)}$ time! (See Theorem~\ref{thm:subset} below.)

As we mentioned earlier, when viewing the $\{0,1\}^d$ vectors as subsets of $[d]$, {\sc Max-IP} corresponds to maximizing the size of the intersection. 
In fact our hardness of approximation result holds even in a seemingly easier special case of {\sc Max-IP} which has received extensive attention: the Subset Query problem \cite{Rama+00,MG03,AAK10,GG10}, which is known to be equivalent to the classical Partial Match problem. The first non-trivial algorithms for this problem appeared in Ronald Rivest's PhD thesis~\cite{Rivest74,Rivest76}.
Since our goal is to prove lower bounds, we consider its offline or batch version (and the lower bound will transfer to the data structure version):
\begin{quote}
Given a collection of (text) sets $T_1,\ldots,T_N \subseteq [d]$ and a collection of (pattern) sets $P_1,\ldots,P_N \subseteq [d]$, is there a set $P_i$ that is contained in a set $T_j$?
\end{quote}
In the $c$-approximate case, we want to distinguish between the case of exact containment, and the case where no $T_j$ can cover more than a $c$-fraction of any $P_i$. We show that even in this very simple problem, we must pay a $2^{(\log{N})^{1-o(1)}}$ approximation factor if it is to be solved in truly-subquadratic time. Hardness of approximation for {\sc Max-IP} follows as a simple corollary of the following stronger statement:

\begin{thm}
\label{thm:subset}
Assuming SETH, for any $\eps>0$,  given two collections $A,B$, each of $N$ subsets of a universe $[m]$, where $m=N^{o(1)}$ and all subsets $b \in B$ have size $k$, no $O(N^{2-\eps})$ time algorithm can distinguish between the cases:

\ifFULL

\begin{description}
\item [{Completeness}] there exist $a \in A, b \in B$ such that $b \subseteq a$;
and
\item [{Soundness}] for every $a \in A, b \in B$ we have $| a \cap b | \leq  k/2^{(\log{N})^{1-o(1)}}$.
\end{description}

\else
\begin{itemize}
\item (Completeness) there exist $a \in A, b \in B$ such that $b \subseteq a$;
and
\item (Soundness) for every $a \in A, b \in B$ we have $| a \cap b | \leq  k/2^{(\log{N})^{1-o(1)}}$.
\end{itemize}

\fi

\end{thm}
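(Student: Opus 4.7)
The plan is to reduce from {\sc PCP-Vectors} (Theorem~\ref{thm:main}) by a direct gadget encoding of each vector as a subset of a small universe. Given an instance $(A,B)$ of {\sc PCP-Vectors} with $|L|,|K|,|\Sigma|=N^{o(1)}$, I would take the universe to be $[m] = L \times \Sigma$, so that $m = N^{o(1)}$ as required. For each matrix $a \in \Sigma^{L \times K}$ in $A$, the corresponding ``text'' subset is
\[
S_a \;\triangleq\; \bigl\{ (\ell,\sigma) \in L \times \Sigma \;:\; \exists\, k \in K \text{ with } a_{\ell,k} = \sigma \bigr\},
\]
which simply records, for each row $\ell$, the set of symbols that appear somewhere in that row of $a$. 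For each $b \in \Sigma^L$ in $B$, the corresponding ``pattern'' subset is
\[
S_b \;\triangleq\; \bigl\{ (\ell, b_\ell) \;:\; \ell \in L \bigr\},
\]
which has size exactly $k \triangleq |L| = N^{o(1)}$, meeting the uniform-size requirement on $B$.

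The key observation is that $(\ell, b_\ell) \in S_a$ if and only if some $a_{\ell,k}$ equals $b_\ell$, so
\[
|S_a \cap S_b| \;=\; \bigl|\{\ell \in L \,:\, \exists k,\; a_{\ell,k} = b_\ell\}\bigr| \;=\; |L| \cdot s(a,b) \;=\; k \cdot s(a,b).
\]
This single identity transfers both sides of the gap cleanly. In the completeness case of Theorem~\ref{thm:main}, there exist $a^*,b^*$ with $s(a^*,b^*)=1$, so $|S_{a^*} \cap S_{b^*}| = k$ and hence $S_{b^*} \subseteq S_{a^*}$. In the soundness case, for every $a,b$ we have $s(a,b) \le 1/2^{(\log N)^{1-o(1)}}$, so $|S_a \cap S_b| \le k/2^{(\log N)^{1-o(1)}}$, which is exactly the claimed soundness bound.

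Finally I would verify that the reduction preserves the fine-grained parameters: it produces two collections of $N$ subsets over a universe of size $N^{o(1)}$ in time $N^{1+o(1)}$, so any $O(N^{2-\varepsilon})$-time algorithm distinguishing the two cases of Theorem~\ref{thm:subset} would distinguish the two cases of Theorem~\ref{thm:main} in the same time, contradicting SETH. There is no substantial obstacle here beyond Theorem~\ref{thm:main} itself; the only thing to keep an eye on is that the universe encoding $L \times \Sigma$ remains subpolynomial (which it does, since both factors are $N^{o(1)}$) and that the ``pattern'' subsets all have the same size $k$, which is automatic from the construction.
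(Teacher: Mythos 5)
Your proof is correct and matches the paper's proof essentially verbatim: same universe $L\times\Sigma$, same encoding of $A$-vectors and $B$-vectors as subsets, and the same key identity $|S_a\cap S_b|=|L|\cdot s(a,b)$ driving both completeness and soundness.
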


Improving our lower bound even to some $N^{\eps}$ factor (for a universal constant $\eps>0$) would refute SETH via the known {\sc Max-IP} algorithms (see e.g.~\cite{pods16}). 
Using an idea of~\cite{WW10-subcubic}, it is not hard to show that Theorem~\ref{thm:subset} also applies to the harder (but more useful) search version widely known as MIPS.

\begin{cor}
\label{cor:MIPS}
Assuming SETH, for all $\eps>0$, no algorithm can preprocess a set of $N$ vectors $p_1,\ldots,p_N \in D \subseteq \{0,1\}^m$ in polynomial time, and subsequently given a query vector $q \in \{0,1\}^m$ can distinguish in $O(N^{1-\eps})$ time between the cases:

\ifFULL

\begin{description}
\item [{Completeness}] there exist $p_i \in D$ such that $\langle p_i, q \rangle \geq s$;
and
\item [{Soundness}] for every $p_i \in D$, $\langle p_i, q \rangle \leq s/2^{(\log{N})^{1-o(1)}}$,
\end{description}
even when $m=N^{o(1)}$ and the similarity threshold $s \in [m]$ is fixed for all queries.
\end{cor}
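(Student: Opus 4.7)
The plan is to upgrade the offline hardness of Theorem \ref{thm:subset} to the online/data-structure version via the standard batching trick of Williams and Williams \cite{WW10-subcubic}. Suppose, for contradiction, that such a data structure exists with preprocessing time $N^{c}$ (for some constant $c \geq 1$) and query time $O(N^{1-\eps})$ achieving the claimed gap.

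Given an offline instance $(A,B)$ from Theorem \ref{thm:subset} with $|A|=|B|=N$ (and every $b \in B$ of size $k$), I would reduce to the online version as follows. Partition $A$ into $N/n'$ groups of size $n'$. For each group $A'$, invoke the preprocessing on $A'$ (viewing each set as its $\{0,1\}$-indicator vector over $[m]$), and then query the resulting data structure with the indicator vector of every $b \in B$, using the fixed similarity threshold $s=k$. A group reports Completeness for some $b$ iff some $a \in A'$ contains $b$; otherwise every pair in that group satisfies $|a \cap b| \leq k/2^{(\log n')^{1-o(1)}}$. Taking the OR over the $N/n'$ groups gives exactly the decision required by Theorem \ref{thm:subset}.

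The total running time is
\[
\frac{N}{n'}\Bigl( n'^{c} + N \cdot n'^{\,1-\eps} \Bigr) \;=\; N \cdot n'^{\,c-1} \;+\; N^{2} \cdot n'^{-\eps}.
\]
Choosing $n' = N^{1/(c-1+\eps)}$ balances both terms and yields a total of $N^{2-\delta}$ with $\delta = \eps/(c-1+\eps) > 0$. Since $n' = N^{\Omega(1)}$, we have $\log n' = \Theta(\log N)$, so the per-group gap $2^{(\log n')^{1-o(1)}}$ is still $2^{(\log N)^{1-o(1)}}$ — exactly matching the soundness guarantee of Theorem \ref{thm:subset}. This contradicts SETH and proves the corollary.

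The only step requiring any genuine thought is checking that the per-group approximation guarantee composes correctly under the OR: Completeness of the original instance necessarily manifests in at least one group, while Soundness is inherited by every sub-database (the max inner product only decreases when restricting $A$), so the OR is lossless. The rest is just bookkeeping with $c$, $\eps$, and $n'$; I expect no substantive obstacle.
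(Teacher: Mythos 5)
Your proposal matches the paper's proof: both invoke the Williams--Williams batching trick, partitioning one color class into blocks of size $n'$, preprocessing each block with the hypothetical data structure, querying it with every vector of the other color (using the fixed threshold $s=k$), and then balancing preprocessing cost against query cost. The paper happens to partition $B$ and query with $A$ rather than the reverse, but since every $b\in B$ has the same size $k$ either orientation gives a fixed threshold, so the two are interchangeable. One small bookkeeping slip: your choice $n' = N^{1/(c-1+\eps)}$ exceeds $N$ whenever $c < 2-\eps$ (and is undefined in spirit at $c=1$); you should cap $n'$ at $N$, or use a uniformly safe choice such as the paper's $n' = N^{1/(2c)}$, which always lies strictly below $\sqrt N$ and gives total time $O\bigl(N^{3/2} + N^{2-\eps/(2c)}\bigr)$.
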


\else
\begin{itemize}
\item (Completeness) there is a $p_i \in D$ such that $\langle p_i, q \rangle \geq s$;
and
\item (Soundness) for all $p_i \in D$, $\langle p_i, q \rangle \leq s/2^{(\log{N})^{1-o(1)}}$,
\end{itemize}
even when $m=N^{o(1)}$ and the similarity threshold $s \in [m]$ is fixed for all queries $q$.
\end{cor}

\fi

Except for the $(1+o(1))$-factor lower bound \cite{pods16} which transfers to MIPS as well, the only lower bounds known were either for specific techniques \cite{MNP07,ACP08,OWZ14,AILRS15}, or were in the cell-probe model but only ruled out extremely efficient queries \cite{AIP06,PTW08,PTW10,KP12,AV15,ALRW16}. 

An important version of {\sc Max-IP} is when the vectors are in $\{ -1, 1 \}^d$ rather than $\{0,1\}^d$. This version is closely related to other famous problems such as the light bulb problem and the problem of learning parity with noise (see the reductions in \cite{Valiant15}).
Negative coordinates often imply trivial results for \emph{multiplicative} hardness of approximation: it is possible to shift a tiny gap of $k$ vs. $k+1$ to a large multiplicative gap of $0$ vs $1$ by adding $k$ coordinates with $-1$ contribution.
In the natural version where we search for a pair with maximum inner product \emph{in absolute value}, this trick does not quite work.
Still, Ahle et al.~\cite{pods16} exploit such cancellations to get a strong hardness of approximation result using an interesting application of Chebychev embeddings.
The authors had expected that a different approach must be taken to prove constant factor hardness for the $\{0,1\}$ case.
Interestingly, since it is easy%
\ifFULL
\else
\footnote{E.g. map each $0$ to a random string in $\{\pm 1\}^d$, and map each $1$ to the string $1^d$.}, 
\fi
~to reduce $\{0,1\}$ to $\{-1,1\}$,
our reduction also improves their lower bound for the $\{-1,1\}$ case from $2^{\tilde \Omega(\sqrt{\log{N}})}$ to the almost-tight $2^{(\log{N})^{1-o(1)}}$
\ifFULL
(see Corollary~\ref{cor:signed-maxip}). 
This also implies an $N^{1-o(1)}$-time lower bound for queries in the indexing version of the problem.

\paragraph{Longest Common Subsequence Closest Pair}
Efficient approximation algorithms have the potential for major impact in \emph{sequence alignment problems}, the standard similarity measure between genomes and biological data. One of the most cited scientific papers of all time studies BLAST, a \emph{heuristic} algorithm for sequence alignment that often returns grossly sub-optimal solutions\footnote{Note that many of its sixty-thousand citations are by other algorithms achieving better results (on certain datasets).} but always runs in near-linear time, in contrast to the best-known worst-case quadratic-time algorithms.
For theoreticians, to get the most insight into these similarity measures, it is common to think of them as Longest Common Subsequence (LCS) or Edit Distance.
The {\sc Bichromatic LCS Closest Pair} problem is:
\begin{quote} Given a (data) set $N$ strings and a (query) set of $N$ strings, all of which have length $m \ll N$, find a pair, one from each set, that have the maximum length common subsequence (noncontiguous).   
\end{quote}
The search version and the Edit Distance version are defined analogously. Good algorithms for these problems would be highly relevant for bioinformatics.

A series of breakthroughs \cite{landau1998incremental,Indyk04-product,bar2004approximating,batu2006oblivious,ostrovsky2007low,andoni2010polylogarithmic,andoni2012approximating} led to ``good'' approximation algorithms for Edit Distance.
While most of these papers focus on the more basic problem of approximating the Edit Distance of two given strings (see Open Question~\ref{oq:2strings}), they use techniques such as metric embeddings that can solve the Edit Distance Closest Pair problem in near-linear time with a $2^{O(\sqrt{\log{m} \log\log{m}})}$ approximation \cite{ostrovsky2007low}.
Meanwhile, both the basic LCS problem on two strings and the LCS Closest Pair resisted all these attacks, and to our knowledge, no non-trivial algorithms are known.
On the complexity side, only a $(1+o(1))$-approximation factor lower bound is known for {\sc Bichromatic LCS Closest Pair}~\cite{ABV15a,BK15,AB17}, and getting a $1.001$ approximation in near-linear time is not known to have any consequences.
For algorithms that only use metric embeddings there are nearly logarithmic lower bounds for Edit-Distance Closest Pair, but even under such restrictions the gaps are large \cite{ADG+03,SU04,KN05,AK07,AJP10}.

Perhaps our most surprising result is a \emph{separation} between these two classical similarity measures. Although there is no formal equivalence between the two, they have appeared to have the same complexity no matter what the model and setting are.
We prove that LCS Closest Pair is \emph{much harder} to approximate than Edit Distance.

\begin{thm}
\label{thm:lcs}
Assuming SETH, there is no $\left({2^{(\log{N})^{1-o(1)}}}\right)$-approximation algorithm for {\sc Bichromatic LCS Closest Pair} on two sets of $N$ permutations of length $m=N^{o(1)}$ in time $O\left(N^{2-\varepsilon}\right)$, for all $\varepsilon>0$. 
\end{thm}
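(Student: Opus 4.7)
My plan is to reduce from {\sc PCP-Vectors} (Theorem \ref{thm:main}). Given an instance $(A,B)$ of size $N$ with parameters $|L|,|K|,|\Sigma| = N^{o(1)}$, completeness score $1$, and soundness score $\le 1/G$ for $G = 2^{(\log N)^{1-o(1)}}$, I will construct in linear time a permutation $\sigma_a$ for each $a \in A$ and $\tau_b$ for each $b \in B$, each of length $m = N^{o(1)}$, such that approximating {\sc Bichromatic LCS Closest Pair} within a factor of $G/K$ suffices to separate completeness from soundness. Since $K = N^{o(1)}$, $G/K$ is still $2^{(\log N)^{1-o(1)}}$, so any $O(N^{2-\varepsilon})$-time approximation algorithm would, via this reduction, violate Theorem \ref{thm:main} and hence SETH.

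A single preprocessing step makes the values $a_{\ell,1},\ldots,a_{\ell,K}$ pairwise distinct within every row $\ell$, by replacing each non-first occurrence of a value with a fresh symbol outside $\Sigma$. Since $b_\ell \in \Sigma$, these fresh symbols never match anything on Bob's side, and the first occurrence of $b_\ell$ in each row (if any) is left untouched, so the score $s(a,b)$ is preserved for every pair.

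The gadget itself is simple. Pick an amplification parameter $n = N^{o(1)}$. In lex order over $(\ell,k) \in [L] \times [K]$, $\sigma_a$ writes the run $(\ell,k,a_{\ell,k},1),\ldots,(\ell,k,a_{\ell,k},n)$, and $\tau_b$ writes the analogous run with $b_\ell$ in place of $a_{\ell,k}$. The explicit $(\ell,k)$-coordinate together with the row-distinctness preprocessing ensures that both sequences have no repeated symbols, so they are permutations of length $LKn$. Cross-$(\ell,k)$ matching is impossible because distinct sub-blocks use disjoint alphabets, so LCS decomposes additively: within sub-block $(\ell,k)$ the two runs are identical if $a_{\ell,k} = b_\ell$ (contributing $n$) and have disjoint alphabets otherwise (contributing $0$). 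Hence $\mathrm{LCS}(\sigma_a,\tau_b) = n\cdot|\{(\ell,k) : a_{\ell,k} = b_\ell\}|$, which is $\ge nL$ in the completeness case and $\le nLK/G$ in the soundness case, yielding the desired gap $G/K$.

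The main obstacle is the precise reading of ``permutation of length $m$''. If only distinct symbols within each string are required, the construction above works as is. Under the stricter reading that all strings are permutations of a common universe $[m]$, I would replace each length-$n$ identity sub-block by a gadget over a fixed small sub-alphabet that encodes the value $v$ by a permutation with a large pairwise LCS when values agree and a small pairwise LCS when they disagree (e.g., identity versus an Erd\H{o}s--Szekeres permutation, at a polynomial blow-up in $|\Sigma'|$ and $n$). This preserves the additivity over $(\ell,k)$ sub-blocks and the per-slot gap, so the overall $G/K$ gap survives while all strings now live in a common alphabet of size still $N^{o(1)}$.
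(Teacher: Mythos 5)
Your first construction is clean but does not actually produce permutations in the sense required by the theorem: since $\sigma_a$ uses symbols $(\ell,k,a_{\ell,k},i)$ and $\tau_b$ uses symbols $(\ell,k,b_\ell,i)$, different strings use different subsets of the large alphabet $[L]\times[K]\times\Sigma\times[n]$. They have no repeated symbols, but they are not permutations of a common universe $[m]$, and there is no consistent relabeling that would make them so. You correctly flag this, and this ``stricter reading'' is indeed the one the theorem needs --- the whole point of the permutation restriction (as the paper emphasizes) is the LIS connection, which makes the hardness result surprising, and the paper's construction really does output strings that are all permutations of one fixed alphabet $\bigcup_i \Sigma_i$.

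The genuine gap is in your second construction, which is the one that would prove the theorem, and it is exactly where the paper does its technical work. You need, for each slot, a family of $|\Sigma|=N^{o(1)}$ permutations of a \emph{fixed} sub-alphabet of size $|\Sigma'|=N^{o(1)}$ such that any two distinct members have LCS at most roughly $\sqrt{|\Sigma'|}\cdot\mathrm{polylog}(N)$ (so that setting $|\Sigma'|\approx G^2K^2\cdot\mathrm{polylog}$ kills the cross-term from the $LK-L/G$ non-matching slots). Your ``identity vs.\ Erd\H{o}s--Szekeres'' example supplies only a single pair with small LCS, not a family of $N^{o(1)}$ pairwise-far permutations, and the phrase ``at a polynomial blow-up'' does not resolve this: a naive bit-by-bit encoding of $v\in\Sigma$ (even through an error-correcting code) gives per-slot LCS $(1-\delta)|\Sigma'|$ for $v\neq w$, i.e.\ only a constant-factor gap, which collapses to a constant-factor overall gap rather than $G$. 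The paper closes this exact gap with the explicit polynomial family $\pi_p(i,j)=(j,\,i+p(j))$ over $\mathcal{F}^2$ for low-degree $p$, and a short argument that $\mathrm{LCS}(\pi_p,\pi_q)=O(\sqrt{|\Sigma'|}\cdot\deg)$ for $p\neq q$ (alternatively, a random family works with high probability, but still needs to be proven). Until you supply such a family, the soundness bound of your second construction is unsupported. Aside from this, your framing is essentially the same as the paper's: a concatenation of per-slot permutation gadgets over disjoint sub-alphabets with additive LCS; the only superficial difference is that you go directly from {\sc PCP-Vectors} rather than through {\sc Max-IP}, which changes nothing substantive.
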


Notice that our theorem holds even under the restriction that the sequences are \emph{permutations}.
This is significant: in the ``global'' version of the problem where we want to compute the LCS between two long strings of length $n$, one can get the exact solution in near-linear time if the strings are permutations (the problem becomes the famous Longest Increasing Subsequence), while on arbitrary strings there is an $N^{2-o(1)}$ time lower bound from SETH.
The special case of permutations has received considerable attention due to connections to preference lists, applications to biology, and also as a test-case for various techniques.
In 2001, Cormode, Muthukrishnan, and Sahinalp \cite{CMS01} gave an $O(\log{m})$-Approximate Nearest Neighbor data structure for Edit Distance on permutations with $N^{o(1)}$ time queries (improved to $O(\log\log{m})$ in \cite{AIK09}), and raised the question of getting similar results for LCS Approximate Nearest Neighbor.
Our result gives a strong negative answer under SETH, showing that {\sc Bichromatic LCS Closest Pair} suffers from near-polynomial approximation factors when the query time is truly sublinear.

\paragraph{Regular Expression Matching}
Given two sets of strings of length $m$, a simple hashing-based approach lets us decide in near-linear time if there is a pair of Hamming distance $0$ (equal strings), or whether all pairs have distance at least $1$.
A harder version of this problem, which appears in many important applications, is when one of the sets of strings is described by a regular expression:
\begin{quote}
Given a regular expression $R$ of size $N$ and a set $S$ of $N$ strings of length $m$, can we distinguish between the case that some string in $S$ is matched by $R$, and the case that every string in $S$ is far in Hamming distance%
\footnote{In our hard instances, all the strings in $\Lreg(R)$ will be of length $m$, so Hamming distance is well defined.}
 from every string in $\Lreg(R)$ (the language defined by $R$)?
\end{quote}
This is a basic approximate version of the classical regular expression matching problem that has been attacked from various angles throughout five decades, e.g. \cite{Tho68,myers1989approximate,Mye92,wu1995subquadratic,knight1995approximate,myers1998reporting,navarro2004approximate,belazzougui2013approximate,BT09,BI16,BGL16}.
Surprisingly, we show that this problem is essentially as hard as it gets: even if there is an exact match, it is hard to find any pair with Hamming distance $(1-\varepsilon)\cdot m$, for any $\varepsilon>0$.
For the case of binary alphabets, we show that even if an exact match exists (a pair of distance $0$), it is hard to find a pair of distance $(\frac{1}{2}-\varepsilon) \cdot m$, for any $\varepsilon>0$.
Our lower bounds also rule out interesting algorithms for the harder setting of Nearest-Neighbor queries:
Preprocess a regular expression so that given a string, we can find a string in the language of the expression that is approximately-the-closest one to our query string.
The formal statement and definitions of regular expressions are given in 
\ifFULL
Section~\ref{sec:regexp}.
\else
the full version.
\fi

\begin{thm}[informal]
Assuming SETH, no $O(N^{2-\eps})$-time algorithm can, given a regular expression $R$ of size $N$ and a set $S$ of $N$ strings of length $m=N^{o(1)}$, distinguish between the two cases:

\ifFULL

\begin{description}
\item [{Completeness}] some string in $S$ is in $\Lreg(R)$;
and
\item [{Soundness}] all strings in $S$ have Hamming distance $(1-o(1))\cdot m$ (or, $(\frac{1}{2}-o(1))\cdot m$ if the alphabet is binary) from all strings in $\Lreg(R)$.
\end{description}

\else

\begin{itemize}
\item (Completeness) some string in $S$ is in $\Lreg(R)$
\item (Soundness) all strings in $S$ have Hamming distance $(1-o(1))\cdot m$ (or, $(\frac{1}{2}-o(1))\cdot m$ if the alphabet is binary) from all strings in $\Lreg(R)$.
\end{itemize}

\fi

\end{thm}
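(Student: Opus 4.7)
The plan is to reduce from {\sc PCP-Vectors} (Theorem~\ref{thm:main}). Given an instance with $N$ vectors on each side, $A \subseteq \Sigma^{L \times K}$ and $B \subseteq \Sigma^{L}$ with $L, K, |\Sigma| = N^{o(1)}$, I will build a regex $R$ and a string set $S$ so that $s(a,b) = 1$ for some pair yields a string in $S \cap \Lreg(R)$, while $s(a,b) \le 1/2^{(\log N)^{1-o(1)}}$ for every pair forces large Hamming distance between every element of $S$ and every element of $\Lreg(R)$. For each $a \in A$ let $R_a := (a_{1, 1} \mid \cdots \mid a_{1, K})(a_{2, 1} \mid \cdots \mid a_{2, K}) \cdots (a_{L, 1} \mid \cdots \mid a_{L, K})$, set $R := R_{a^{(1)}} \mid \cdots \mid R_{a^{(N)}}$, and $S := \{\, b_1 b_2 \cdots b_L : b \in B\,\}$. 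Then $\Lreg(R) = \{a_{1, k_1} \cdots a_{L, k_L} : a \in A,\ k_1, \ldots, k_L \in [K]\}$, and the Hamming distance from such a string to a $b \in S$ is exactly $L - |\{\ell : a_{\ell, k_\ell} = b_\ell\}|$, which, minimized over the $k_\ell$'s, equals $L\,(1 - s(a, b))$. So completeness gives distance $0$ and soundness gives distance at least $(1 - o(1))\, L$, proving the theorem over the alphabet $\Sigma$ with $m := L$.

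For the binary-alphabet version, I would encode each $\sigma \in \Sigma$ by a codeword $c(\sigma) \in \{0, 1\}^t$ of length $t = (\log N)^{O(1)} = N^{o(1)}$, picked so that any two distinct codewords agree on at most $(1/2 + o(1))\, t$ coordinates; since $|\Sigma|^2 = N^{o(1)}$, a random code meets this by a Chernoff-plus-union-bound argument (an explicit concatenated code also works). Substituting $c(\cdot)$ for each symbol throughout $R$ and throughout each string in $S$ preserves the structure of $\Lreg(R)$, whose members are now binary strings of length $m := Lt$. Matched slots contribute $0$ to Hamming distance and mismatched slots contribute at least $(1/2 - o(1))\, t$, so the soundness Hamming distance is at least $L\,(1 - o(1)) \cdot (1/2 - o(1))\, t = (1/2 - o(1))\, m$, as required.

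The resulting regex has size $N^{1 + o(1)}$ and $S$ consists of $N$ strings of length $N^{o(1)}$, so an $O(N^{2 - \eps})$-time approximate-regex-matching algorithm would solve {\sc PCP-Vectors} in $O(N^{2 - \eps'})$ time, contradicting Theorem~\ref{thm:main} and hence SETH. The main point to verify carefully is that no ``unintended'' strings enter $\Lreg(R)$ once single symbols become codewords of length $t$: because $R$ is a rigid concatenation of fixed-length disjunctions, every string it accepts has a unique parse into $L$ consecutive codewords, so the Hamming-distance analysis above is unaffected. A secondary technical issue is producing the binary code with the claimed agreement bound in $\tilde O(N)$ time, which the probabilistic construction (succeeding with probability $1 - o(1)$) or a standard Reed--Solomon-concatenated code easily handles in the regime $t = (\log N)^{O(1)}$.
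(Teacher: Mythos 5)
Your proposal is correct and follows essentially the same route as the paper: reduce from {\sc PCP-Vectors} by turning each $a \in A$ into a concatenation of $K$-way disjunctions and OR-ing these $N$ gadgets together, place the $B$-vectors verbatim into $S$, and observe that the minimum Hamming distance equals $L(1 - s(a,b))$; the binary case is handled, as in the paper, by substituting each alphabet symbol with a fixed-length codeword of relative distance $\tfrac12 - o(1)$ (random or Reed--Solomon--Hadamard concatenated). The only cosmetic difference is that the paper fixes an arbitrarily small constant $\delta$ rather than phrasing the code distance as $\tfrac12 - o(1)$, but the arguments are otherwise identical.
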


\paragraph{Diameter in Product Metrics}

The {\em diameter} (or {\em furthest pair}) problem has been well-studied in a variety of metrics (e.g. graph metrics \cite{ACIM99,RV13,CLRSTW14}).
There is a trivial $2$-approximation in near-linear time (return the largest distance from an arbitrary point), and for arbitrary metrics (to which we get query access) there is a lower bound stating that a quadratic number of queries is required to get a $(2-\delta)$-approximation \cite{Indyk99-diameter_lb}.
For $\ell_2$-metric, there is a sequence of improved subquadratic-time approximation algorithms~\cite{EK89-diameter_sqrt3, FP02-diameter, BOR04-diameter, Indyk00-diameter, GIV01-diameter, Indyk03-diameter}. 
The natural generalization to the $\ell_p$-metric for arbitrary $p$ is, to the best of our knowledge, wide open.

While we come short of resolving the complexity of approximating the diameter for 
$\ell_p$-metrics, 
we prove a tight inapproximability result for the slightly more general problem for the product of $\ell_p$ metrics.
Given a collection of metric spaces $M_i = \langle X_i,\Delta_i \rangle$, their $f$-{\em product metric} is defined as 
\ifFULL

$$\Delta\Big((x_1, \ldots, x_k), (y_1, \ldots, y_k) \Big) \triangleq f\Big(\Delta_1(x_1, y_1), \ldots, \Delta_k(x_k, y_k) \Big).$$

\else
$$\Delta\Big((x_1, \ldots, x_k), (y_1, \ldots, y_k) \Big)$$
$$ \triangleq f\Big(\Delta_1(x_1, y_1), \ldots, \Delta_k(x_k, y_k) \Big).$$
\fi
In particular, we are concerned with the $\ell_2$-product of $\ell_{\infty}$-spaces, whose metric is defined as:
\begin{gather} \label{eq:product}
\Delta_{2,\infty}(x,y) \triangleq 
	\sqrt{\sum_{i =1}^{d_2} \left( 
		\max_{j = 2}^{d_{\infty}} \left\{
				\Big| x_{i,j} - y_{i,j} \Big|
				\right\} \right)^2}.
\end{gather}
(This is a special case of the more general $\Delta_{2,\infty,1}(\cdot,\cdot)$ product metric, studied by~\cite{AIK09}.)

Product metrics (or {\em cascaded norms}) are useful for aggregating different types of data~\cite{Indyk98-product, Indyk02-product, CM05-cascaded, JW09-cascaded}.
They also received significant attention from the algorithms community because they allow rich embeddings, yet are amenable to algorithmic techniques (e.g. \cite{Indyk02-product, Indyk03-diameter, Indyk04-product, AIK09,AJP10, andoni2012approximating}).

\begin{thm}[Diameter]
Assuming SETH, there are no $(2-\delta)$-approximation algorithms for {\sc Product-Metric Diameter} in time $O\left(N^{2-\varepsilon}\right)$, for any constants $\varepsilon,\delta>0$.
\end{thm}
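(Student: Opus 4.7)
The plan is to reduce from PCP-Vectors (Theorem~\ref{thm:main}), whose $N^{2-\varepsilon}$ hardness supplies a score gap of $1$ versus $2^{-(\log N)^{1-o(1)}}$; for a factor-$2$ diameter gap only an $\Omega(1)$-vs-$o(1)$ gap is really needed, so this is more than enough. Given an instance $(A,B)\subseteq \Sigma^{L\times K}\times \Sigma^L$ with $|L|,|K|,|\Sigma|=N^{o(1)}$, I will embed each $a\in A$ as a point $x^a$ and each $b\in B$ as a point $y^b$ in the $\ell_2$-product of $|L|$ copies of an $\ell_\infty$-space of dimension $K\cdot|\Sigma|$, and take the diameter instance to be the set $P=\{x^a:a\in A\}\cup\{y^b:b\in B\}$ of size $2N$.

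For each $\ell\in L$, I index the inner coordinates by pairs $(k,\sigma)\in K\times\Sigma$ and set
\begin{gather*}
x^a_{\ell,k,\sigma} \;:=\; \mathbf{1}[a_{\ell,k}=\sigma], \qquad y^b_{\ell,k,\sigma} \;:=\; -\mathbf{1}[b_{\ell}=\sigma].
\end{gather*}
A short case analysis shows that the $\ell_\infty$-distance on the $\ell$-th block equals $\max_{k,\sigma}\bigl(\mathbf{1}[a_{\ell,k}=\sigma]+\mathbf{1}[b_\ell=\sigma]\bigr)$, which is exactly $2$ whenever some $k$ satisfies $a_{\ell,k}=b_\ell$ (take $\sigma=b_\ell$) and exactly $1$ otherwise (the coordinate $\sigma=b_\ell$ always contributes $1$, and no coordinate can attain the value $2$ without such a $k$). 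In this way the inner $\ell_\infty$ implements the existential quantifier $\bigvee_{k}(a_{\ell,k}=b_\ell)$ hidden in the PCP-Vectors score, and summing the squared block distances over $\ell$ yields
\begin{gather*}
\Delta_{2,\infty}(x^a,y^b)^2 \;=\; 4\,s(a,b)\,|L| + (1-s(a,b))\,|L| \;=\; |L|\bigl(1+3\,s(a,b)\bigr).
\end{gather*}

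In the completeness case, the good pair yields $\Delta_{2,\infty}(x^{a^*},y^{b^*})=2\sqrt{|L|}$, so the diameter of $P$ is at least $2\sqrt{|L|}$. In the soundness case, every bichromatic pair has $s(a,b)=o(1)$ and hence $\Delta_{2,\infty}(x^a,y^b)\le\sqrt{|L|}(1+o(1))$; meanwhile each monochromatic pair $(x^{a_1},x^{a_2})$ or $(y^{b_1},y^{b_2})$ uses coordinates drawn from $\{0,1\}$ or from $\{-1,0\}$ respectively, so its $\ell_\infty$-distance on every block is at most $1$ and its total distance is at most $\sqrt{|L|}$. Thus the diameter is at most $\sqrt{|L|}(1+o(1))$, and the ratio exceeds $2-\delta$ for any fixed $\delta>0$ and sufficiently large $N$. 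The total number of coordinates is $|L|\cdot K\cdot|\Sigma|=N^{o(1)}$ and the reduction runs in linear time, so an $O(N^{2-\varepsilon})$-time $(2-\delta)$-approximation for Product-Metric Diameter would solve PCP-Vectors in the same time, contradicting Theorem~\ref{thm:main}.

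The main technical point to verify is the block identity ``$\ell_\infty$-distance $=1+\mathbf{1}[\exists k:a_{\ell,k}=b_\ell]$'' together with the monochromatic-pair bound. Both are short case analyses, but the key design principle is that pairing indicator coordinates with sign-flipping makes the ``match'' case the unique way for a coordinate to reach value $2$, while every other configuration is capped at $1$; the inner $\ell_\infty$ therefore behaves precisely as a logical OR, and the outer $\ell_2$ amplifies the resulting per-block $2$-vs-$1$ gap into an almost-tight diameter gap of $2$.
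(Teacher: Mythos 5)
Your proof is correct and follows essentially the same route as the paper: reduce from \textsc{PCP-Vectors}, embed $A$-vectors with $\{0,1\}$ entries and $B$-vectors with $\{-1,0\}$ entries so that the inner $\ell_\infty$ acts as the per-block OR (distance $2$ on a match, $1$ otherwise), and observe that monochromatic pairs stay at $\ell_\infty$-distance at most $1$ per block. The only differences are cosmetic --- you index the inner space by $(k,\sigma)$ whereas the paper pre-computes the OR over $k$ and indexes by $\sigma$ alone (both give $N^{o(1)}$ dimension) --- and your arithmetic $\Delta_{2,\infty}(x^a,y^b)^2=|L|\bigl(1+3\,s(a,b)\bigr)$ is actually more careful than the paper's intermediate display, which has a harmless typo omitting the squaring and square root.
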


\subsection{Related work}

For all the problems we consider, SETH lower bounds for the exact version are known.
See \cite{Wil05,AW15} for the {\sc Max-IP} and {\sc Subset Queries} problems, \cite{AVW14,ABV15a,BK15,AHVW16} for {\sc Bichromatic LCS Closest Pair}, \cite{BI16,BGL16} for {\sc Regular Expression Matching}, and \cite{Wil05} for {\sc Metric Diameter}.
Also,~\cite{Wil18} recently proved SETH lower bounds for the related problems of exact $\ell_2$ {\sc Diamater} and {Bichromatic Closest Pair} over short vectors ($d= \poly \log \log (N)$).

Prior to our work, some hardness of approximation results were known using more problem-specific techniques.
For example, distinguishing whether the diameter of a graph on $O(n)$ edges is $2$ or at least $3$ in truly-subquadratic time refutes SETH \cite{RV13}, which implies hardness for $(3/2-\eps)$ approximations. (This is somewhat analogous to the $\NP$-hardness of distinguishing $3$-colorable graphs from graphs requiring at least $4$ colors, immediately giving hardness of approximation for the chromatic number.) 
In most cases, however, this fortunate situation does not occur.
The only prior SETH-based hardness of approximation results proved with more approximation-oriented techniques are by Ahle et al. \cite{pods16} for {\sc Max-IP} via clever embeddings of the vectors. 
As discussed above, for the case of $\{0,1\}$-valued vectors, their inapproximability factor is still only $1+o(1)$.

\cite{AB17} show that, under certain complexity assumptions, {\em deterministic} algorithms cannot approximate the Longest Common Subsequence (LCS) of two strings to within $1+o(1)$ in truly-subquadratic time.
They tackle a completely orthogonal obstacle to proving SETH-based hardness of approximation: for problems like LCS with two long strings, the quality of approximation depends on the {\em fraction of assignments} that satisfy a SAT instance. There is a trivial algorithm for approximating this fraction: sample assignments uniformly at random.  See further discussion on Open Question~\ref{oq:2strings}. 

Recent works by Williams~\cite{Williams16_MA-SETH} (refuting the \MA-variant of SETH) and Ball et al.~\cite{BRSV17-average_case} also utilize low-degree polynomials in the context of SETH and related conjectures. Their polynomials are quite different from ours: they sum over many possible assignments, and are hard to {\em evaluate} (in contrast, the polynomials used in 
\ifFULL
the proof of our Theorem~\ref{thm:CC} 
\else
our proof
\fi
correspond to a single assignment, and they are trivial to evaluate).

The main technical barrier to hardness of approximation in \P~is the blowup incurred by standard PCP constructions; in particular, we overcome it with distributed constructions. 
There is also a known construction of PCP with linear blowup for large (but sublinear) query complexity~\cite{BKKMS16-linear_PCP} with non-uniform verifiers; note however that merely obtaining linear blowup is not small enough for our purposes. Different models of ``non-traditional'' PCPs, such as interactive PCPs~\cite{KR08-interactive_pcp} and interactive oracle proofs (IOP)~\cite{BCS16-IOP, RRR16-IOP} have been considered and found ``positive'' applications in cryptography (e.g.~\cite{GKR15-delegation_muggles, GIMS10-zkPCP, BCS16-IOP}). 
In particular,~\cite{BCGRS16-IOP} obtain a linear-size IOP. It is an open question whether these interactive variants can imply interesting hardness of approximation results~\cite{BCGRS16-IOP}.
(And it would be very interesting if our distributed PCPs have any cryptographic applications!)

After the first version of this paper became public, it was brought to our attention
that the term "distributed PCP" has been used before in a different context by Drucker \cite{Drucker10_thesis}.
In the simplest variant of Drucker's model, Alice and Bob want to compute $f(\alpha, \beta)$ with minimal communication.
They receive a PCP that allegedly proves that $f(\alpha, \beta) = 1$; 
Alice and Bob each query the PCP at two random locations and independently decide whether to accept or reject the PCP. 
As with the interactive variants of PCP, we don't know of any implications of Drucker's work for hardness of approximation, but we think that this is a fascinating research direction.

\subsection{Discussion}
In addition to resolving the fine-grained approximation complexity of several fundamental problems, 
our work opens a hope to understanding more basic questions in this area.
We list a few that seem to represent some of the most fundamental challenges, as well as exciting applications.

\paragraph{{\sc Bichromatic LCS Closest Pair Problem} over $\{0,1\}$}
The {\sc Bichromatic LCS Closest Pair Problem} is most interesting in two regimes: 
permutations (which, by definition, require a large alphabet);
and small alphabet, most notably $\{0,1\}$.
For the regime of permutations, we obtain nearly-polynomial hardness of approximation. 
For small alphabet $\Sigma$, per contra, there is a trivial $1/|\Sigma|$-approximation algorithm in near-linear time: pick a random $\sigma \in \Sigma$, and restrict all strings to their $\sigma$-subset.
Are there better approximation algorithms?

Our current hardness techniques are limited because this problem does not admit an approximation preserving OR-gadget for a large OR. In particular the $1/|\Sigma|$-approximation algorithm outlined above implies that we cannot combine much more than $|\Sigma|$ substrings in a clever way and expect the LCS to correspond to just one substring. 
\begin{oq}\label{oq:binary}
Is there a $1.1$-approximation for the {\sc Bichromatic LCS Closest Pair Problem} on binary inputs running in $O(n^{2-\eps})$ time, for some $\eps>0$?
\end{oq}

\paragraph{{\sc LCS Problem} (with two strings)}
Gadgets constructed in a fashion similar to our proof of Theorem~\ref{thm:lcs} can be combined together (along with some additional gadgets) into two long strings $A,B$ of length $m$, in a way that yields a reduction from SETH to computing the longest common subsequence (LCS) of $(A,B)$, ruling out {\em exact} algorithms in $O(m^{2-\varepsilon})$~\cite{ABV15a,BK15}.
However, in the instances output by this reduction, approximating the value of the LCS reduces to approximating the {\em fraction} of assignments that satisfy the original formula; it is easy to obtain a good additive approximation by sampling random assignments.
The recent work of \cite{AB17} mentioned above, uses complexity assumptions on deterministic algorithms to tackle this issue, but their ideas do not seem to generalize to randomized algorithms.
\begin{oq}\label{oq:2strings}
Is there a $1.1$-approximation for {\sc LCS} running in $O(n^{2-\eps})$ time, for some $\eps>0$? (Open for all alphabet sizes.)
\end{oq}

\paragraph{Dynamic Maximum Matching}
A holy grail in dynamic graph algorithms is to maintain a $(1+\eps)$-approximation for the \emph{Maximum Matching} in a dynamically changing graph, while only spending amortized $n^{o(1)}$ time on each update.  
Despite a lot of attention in the past few years~
\cite{gupta2013fully,neiman2016simple,bhattacharya2014deterministic,
baswana2015fully,bernstein2015fully,bhattacharya2016design,bernstein2016faster,
peleg2016dynamic,bhattacharya2016new,solomon2016fully},
current algorithms are far from achieving this goal: one can obtain a $(1+\eps)$-approximation by spending $\Omega(\sqrt{m})$ time per update, or one can get an $2$-approximation with $\tilde{O}(1)$ time updates.

For exact algorithms, we know that $n^{o(1)}$ update times are impossible under popular conjectures \cite{Pat10,AV14,KPP16,HKNS15,Dahlgaard16}, such as $3$-SUM\footnote{The $3$-SUM Conjecture, from the pioneering work of \cite{GO95}, states that we cannot find three numbers that sum to zero in a list of $n$ integers in $O(n^{2-\eps})$ time, for some $\eps>0$.}, Triangle Detection\footnote{The conjecture that no algorithm can find a triangle in a graph on $m$ edges in $O(m^{4/3-\eps})$ time, for some $\eps>0$, or even just that $O(m^{1+o(1)})$ algorithms are impossible \cite{AV14}.} and the related Online Matrix Vector Multiplication\footnote{The conjecture that given a Boolean $n \times n$ matrix $M$ and a sequence of $n$ vectors $v_1,\ldots,v_n \in \{0,1\}^n$ we cannot compute the $n$ products $M\cdot x_i$ in an online fashion (output $Mx_i$ before seeing $x_{i+1}$) in a total of $O(n^{3-\eps})$ time \cite{HKNS15}. See \cite{LW17} for a recent upper bound.}.
From the viewpoint of PCP's, this question is particularly intriguing since it seems to require hardness amplification for one of these other conjectures.
Unlike all the previously mentioned problems, even the exact case of dynamic matching is not known to be SETH-hard.

\begin{oq}
Can one maintain an $(1+\eps)$-approximate maximum matching dynamically, with $n^{o(1)}$ amortized update time?
\end{oq}

\subsubsection*{New frameworks for hardness of approximation}
More fundamental than resolving any particular problem, 
our main contribution is a conceptually new framework for proving hardness of approximation for problems in \P~via {\em distributed PCPs}. 
In particular, we were able to resolve several open problems while relying on simple algebrization techniques from early days of PCPs (e.g.~\cite{LFKN92} and reference therein). 
It is plausible that our results can be improved by importing into our framework 
more advanced techniques from decades of work on PCPs --- starting with verifier composition~\cite{AS98-PCP}, parallel repetition~\cite{Raz98-parallel_repetition}, Fourier analysis~\cite{Hastad01}, etc.

\paragraph{Hardness from other sublinear communication protocols for Set Disjointness}
A key to our results is an \MA~protocol for Set Disjointness with sublinear communication, which trades off between the size of Merlin's message and the size of Alice and Bob's messages. There are other non-standard communication models where Set Disjointness enjoys a sublinear communication protocol, for example quantum communication%
\footnote{Note that due to Grover's algorithm, SETH is false for quantum computational complexity; but it is also false for \MA~\cite{Williams16_MA-SETH}, which doesn't prevent us from using the \MA~communication protocol in an interesting way.}~\cite{BCW98-quantum_communication}.
\begin{oq}
Can other communication models inspire new reductions (or algorithms) for standard computational complexity?
\end{oq}

\paragraph{Hardness of approximation from new models of PCPs}
This is the most open-ended question. 
Formulating a clean conjecture about distributed PCPs was extremely useful for understanding the limitations and possibilities of our framework --- even though our original conjecture turned out to be false.
\begin{oq}
Formulate a simple and plausible PCP-like conjecture that resolves any of the open questions mentioned in this section. 
\end{oq}

\subsubsection*{Bichromatic vs Monochromatic Closest Pair}
Essentially all known SETH-based hardness results for variants of closest pair, including the ones in this paper (Theorems~\ref{thm:subset} and~\ref{thm:lcs}), hold for the so called ``bichromatic'' variant: the input to the algorithm is two sets $A,B$, and the goal is to find the closest pair $(a,b) \in A\times B$.
Indeed, this is the most interesting variant because it implies hardness for the corresponding data structure variants (as in Corollary~\ref{cor:MIPS}). 
Surprisingly, this understanding of bichromatic closest pair problems does not seem to translate to the corresponding ``monochromatic'' variants, where the input to the algorithm is a single set $U$, and the goal is to find the closest pair $u,v \in U$ ($u \neq v$).
In an earlier version of this paper we had mistakenly claimed that our techniques here can also shed light on this problem, but this turned out to be incorrect (see also Erratum in Section~\ref{sec:erratum}).
Understanding the complexity of exact and approximate monochromatic closest pair problems for many metrics remains an interesting open question; see e.g.~\cite{DSL16-bichromatic, Wil18} for further discussion.

\ifFULL
\section{Preliminaries}\label{sec:prelim}

\paragraph{The Strong Exponential Time Hypothesis} 
SETH was suggested by Impagliazzo and Paturi~\cite{CIP06,IP01-SETH} as a possible explanation for the lack of improved algorithms for {\sc $k$-SAT}: Given a $k$-CNF formula (each clause has $k$ literals) on $n$ variables, decide if it is satisfiable.
Known algorithms have an upper bound of the form $O(2^{(1-c/k)n})$, where $c$ is some constant, which makes them go to $2^n$ as $k$ grows.
The conjecture is that this is inevitable.
\begin{conj}[SETH]
\label{conj:SETH}%
For any $\eps > 0$ there exists $k \ge 3$ such that {\sc $k$-SAT} on $n$ variables cannot be solved in time $O(2^{(1-\eps)n})$.
\end{conj}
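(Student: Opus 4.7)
The final statement is the Strong Exponential Time Hypothesis, which is an unresolved conjecture rather than a theorem: no proof is known, and since SETH implies $\mathrm{P} \neq \mathrm{NP}$ (indeed it refines the Exponential Time Hypothesis), any proof would in particular settle longstanding open problems in complexity. So a genuine ``proof proposal'' is not available; what I can describe is the shape an argument would need to take and the evidence on which belief in the conjecture rests.

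The plan, in outline, would be to exhibit a uniform lower bound matching the current upper bound landscape. The best known $k$-SAT algorithms --- PPZ, PPSZ, Sch\"oning's random walk and its derandomizations, and their refinements --- all run in time $2^{(1-c_k)n}$ with a savings $c_k = \Theta(1/k)$ that degrades to $0$ as $k \to \infty$. A proof of SETH would consist in showing that this degradation is inherent: for every $\varepsilon > 0$, there is a $k$ large enough that no algorithm (in a general model such as word RAM, or log-cost RAM) can achieve savings $\varepsilon$ on all $k$-CNFs. First one would need to identify a family of hard instances --- presumably random $k$-SAT near the satisfiability threshold, or carefully constructed instances obtained by padding --- for which no shortcut over exhaustive search is possible. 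Second, one would need a lower-bound technique powerful enough to rule out arbitrary algorithms, not just restricted models such as resolution, DPLL with fixed heuristics, or bounded-width branching programs.

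The main obstacle, and the reason the conjecture remains open after two decades, is precisely this second step: we possess essentially no general-purpose technique for proving super-polynomial (let alone exponential with a specific savings) lower bounds against arbitrary algorithms for NP problems. Consequently, the realistic options are (i) to derive SETH from some stronger and equally unproven hypothesis, (ii) to prove it within a restricted algorithmic model where lower-bound techniques are available, or (iii) to accumulate further conditional and empirical evidence --- e.g., tight consistency with PPSZ-type barriers, width-vs-time tradeoffs for resolution, and the growing web of fine-grained reductions (including the ones in this paper) that connect SETH to seemingly unrelated problems in $\mathrm{P}$. Accordingly, the paper does not attempt to prove Conjecture~\ref{conj:SETH}; it adopts it as a hypothesis, and the substantive content of the paper is the chain of reductions \emph{from} SETH to hardness of approximation for the problems discussed in the introduction.
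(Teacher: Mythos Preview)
Your assessment is correct: the statement is a conjecture, not a theorem, and the paper makes no attempt to prove it. It is introduced in the preliminaries as a standing hypothesis (attributed to Impagliazzo and Paturi) from which the paper's conditional lower bounds are derived, exactly as you describe in your final sentence.
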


\paragraph{Sparsification Lemma} 
An important tool for working with SETH is the \emph{Sparsification Lemma} of Impagliazzo, Paturi, and Zane~\cite{IPZ01-ETH}, which implies the following.
\begin{lem}[Follows from Sparsification Lemma~\cite{IPZ01-ETH}]
\label{lem:sparsification}%
If there is an $\eps>0$ such that for all $k \ge 3$ we can solve  {\sc $k$-SAT} on $n$ variables and $c_{k,\eps}\cdot n$ clauses in $O(2^{(1-\eps)n})$ time, then SETH is false.
 \end{lem}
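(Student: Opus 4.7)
The plan is to derive this lemma as a direct consequence of the full Sparsification Lemma of Impagliazzo, Paturi, and Zane~\cite{IPZ01-ETH}, which I would invoke as a black box. The full lemma states: for every $k \geq 3$ and every $\delta > 0$, there is a constant $C_{k,\delta}$ such that every $k$-CNF formula $\varphi$ on $n$ variables is logically equivalent to a disjunction $\varphi \equiv \bigvee_{i=1}^{t} \psi_i$ where $t \leq 2^{\delta n}$ and each $\psi_i$ is a $k$-CNF with at most $C_{k,\delta} \cdot n$ clauses; moreover, this decomposition can be computed in $2^{\delta n} \cdot \mathrm{poly}(n)$ time. The constant $c_{k,\eps}$ appearing in the hypothesis of the lemma is to be interpreted as exactly such a sparsification constant (in particular, one may take $c_{k,\eps} := C_{k,\eps/2}$).

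Given the hypothesis, I would argue as follows. Fix the $\eps > 0$ provided by the hypothesis, and fix an arbitrary $k \geq 3$; our goal is to solve arbitrary $k$-SAT on $n$ variables in time $O(2^{(1-\eps')n})$ for some $\eps' > 0$ depending only on $\eps$ (not on $k$), which contradicts SETH. Given an arbitrary $k$-CNF $\varphi$ on $n$ variables, first apply the Sparsification Lemma with parameter $\delta := \eps/2$, producing at most $2^{(\eps/2)n}$ sparse sub-instances $\psi_1, \dots, \psi_t$, each of which has at most $c_{k,\eps} \cdot n$ clauses (by our choice of $c_{k,\eps}$). Then run the hypothesized algorithm on each $\psi_i$ in turn, and return SAT iff at least one call returns SAT. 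Correctness follows because $\varphi$ is satisfiable iff some $\psi_i$ is.

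The total running time is
\[
2^{(\eps/2)n} \cdot O\!\left(2^{(1-\eps)n}\right) \cdot \mathrm{poly}(n) \;=\; O\!\left(2^{(1 - \eps/2)n}\right),
\]
so setting $\eps' := \eps/2 > 0$ gives an algorithm for arbitrary $k$-SAT running in $O(2^{(1-\eps')n})$ time. Since this holds for every $k$, SETH is refuted. The one conceptual step to highlight is that, although the decomposition produces exponentially many sub-instances, the branching factor $2^{(\eps/2)n}$ is chosen strictly smaller than the speedup $2^{\eps n}$ delivered by the hypothesis, leaving a net improvement of $2^{(\eps/2)n}$ in the exponent.

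The main obstacle in this proof plan is the Sparsification Lemma itself, whose proof is a nontrivial combinatorial branching argument that repeatedly finds "popular" sub-clauses common to many clauses and branches on their truth value, ensuring both bounded branching depth and linear clause-density in every leaf. Fortunately, for the purpose of this lemma we only need its statement: the technical work of proving sparsification is fully external to our reduction, so no additional effort is required beyond the parameter-matching described above.
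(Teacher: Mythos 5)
Your proof is correct and is exactly the standard derivation that the paper is implicitly invoking when it writes "Follows from Sparsification Lemma": sparsify with branching parameter $\delta < \eps$, run the assumed sparse solver on each of the $2^{\delta n}$ sub-instances, and observe that the net exponent improves. The only nitpick is the final line: $2^{(\eps/2)n}\cdot 2^{(1-\eps)n}\cdot\mathrm{poly}(n)$ is not literally $O(2^{(1-\eps/2)n})$ because of the polynomial factor, so you should conclude with some $\eps' < \eps/2$ (say $\eps' = \eps/3$) rather than $\eps' = \eps/2$; this does not affect the substance of the argument.
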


\paragraph{Notation}
We use $[n]$ to denote the set $\{1,\dots,n\}$, and $x_{-n}$ to denote the vector $(x_1 ,\dots ,x_{n-1})$.

\section{\MA~Communication Complexity}\label{sec:CC}

In this section we prove a \MA~(Merlin-Arthur) style communication protocol for Set Disjointness with sublinear communication and near-polynomial soundness.
Our protocol is similar to the protocol from~\cite{AW09-algebrization}, but optimizes different parameters (in particular, we obtain very low soundness).

\begin{thm}\label{thm:CC}
For $T \leq n$, there exists a computationally efficient \MA-protocol for Set Disjointness in which:
\begin{enumerate}
\item Merlin sends Alice $O(n \log{n}/T)$ bits; 
\item Alice and Bob jointly toss  $\log_2 n +O(1)$ coins; 
\item Bob sends Alice $O(T \log{n})$ bits.
\item Alice returns \textup{Accept} or \textup{Reject}.
\end{enumerate}
If the sets are disjoint, there is a message from Merlin such that Alice always accepts; 
otherwise (if the sets have a non-empty intersection or Merlin sends the wrong message), Alice rejects with probability $\geq \frac{1}{2}$. 
\end{thm}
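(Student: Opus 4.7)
My plan is to adapt the Aaronson--Wigderson \MA~protocol for Set Disjointness to the asymmetric trade-off stated in the theorem, using a block-partitioned low-degree polynomial encoding. Partition $[n]$ into $T$ blocks of size $n/T$ (assume $T$ divides $n$, else pad), writing $\alpha=(\alpha^{(1)},\ldots,\alpha^{(T)})$ and $\beta=(\beta^{(1)},\ldots,\beta^{(T)})$ with $\alpha^{(t)},\beta^{(t)}\in\{0,1\}^{n/T}$. Fix a prime $p\in(n,2n]$ (which exists by Bertrand's postulate) and work over $\mathbb{F}_p$, identifying $\{1,\ldots,n/T\}$ with a subset $H\subset\mathbb{F}_p$. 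For each block $t$, let $\tilde\alpha^{(t)},\tilde\beta^{(t)}\in\mathbb{F}_p[y]$ be the unique polynomials of degree less than $n/T$ interpolating $\alpha^{(t)}$ and $\beta^{(t)}$ on $H$, and set
$$P^*(y)\;\triangleq\;\sum_{t=1}^{T}\tilde\alpha^{(t)}(y)\,\tilde\beta^{(t)}(y)\;\in\;\mathbb{F}_p[y],$$
which has degree less than $2n/T$. By construction $\sum_{h\in H}P^*(h)=\sum_{i=1}^n\alpha_i\beta_i$ as integers, and since this value lies in $[0,n]\subset[0,p)$, it vanishes mod $p$ iff $\alpha$ and $\beta$ are disjoint.

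The protocol runs as follows. Merlin sends Alice a polynomial $P\in\mathbb{F}_p[y]$ of degree less than $2n/T$ claimed to equal $P^*$, costing $O((n/T)\log n)=O(n\log n/T)$ bits. Alice first tests $\sum_{h\in H}P(h)\stackrel{?}{=}0$ in $\mathbb{F}_p$; if it fails, she rejects. She and Bob then sample $r\in\mathbb{F}_p$ uniformly using $\log_2 p=\log_2 n+O(1)$ coins. Bob sends Alice the $T$ field elements $\tilde\beta^{(1)}(r),\ldots,\tilde\beta^{(T)}(r)$, costing $O(T\log n)$ bits. Alice evaluates $\tilde\alpha^{(t)}(r)$ locally from her own data and accepts iff $P(r)=\sum_t\tilde\alpha^{(t)}(r)\tilde\beta^{(t)}(r)$. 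All parties run in polynomial time using standard Lagrange interpolation, multipoint evaluation, and polynomial multiplication.

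Completeness is immediate: if the sets are disjoint, then $P=P^*$ passes both tests. For soundness, suppose the sets intersect. If $P=P^*$, then $\sum_{h\in H}P(h)=\sum_i\alpha_i\beta_i\in[1,n]$ and is nonzero mod $p$, so Alice rejects at the first check. Otherwise $P\ne P^*$, and $P-P^*$ is a nonzero polynomial of degree less than $2n/T$; by Schwartz--Zippel the second check accepts with probability at most $(2n/T)/p<2/T$, which is at most $1/2$ for $T$ larger than a small absolute constant (smaller $T$ can be handled directly within the stated $O$-bounds by having Bob send $\beta$ in full). The one delicate point is pinning down $p$: it must be large enough that the intersection count $\sum_i\alpha_i\beta_i\in[0,n]$ survives modular reduction (forcing $p>n$), yet small enough that each field element fits in $O(\log n)$ bits; any prime in $(n,2n]$ does both.
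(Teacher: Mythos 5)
Your proposal is correct and follows essentially the same route as the paper: block-partition $[n]$ into $T$ blocks of size $n/T$, interpolate each half's block indicators as degree-$(n/T-1)$ polynomials over a prime field of size $\Theta(n)$, have Merlin send the degree-$(<2n/T)$ product-sum polynomial, have Alice check it is consistent with disjointness, and have Bob send his $T$ evaluations at a jointly random point so Alice can verify the claimed polynomial via Schwartz--Zippel. The only cosmetic differences are that the paper has Alice check $\Phi(i)=0$ pointwise on the interpolation set rather than checking that the sum over the set vanishes, and the paper takes the field size $\approx 4n$ so that the degree-to-field-size ratio is below $1/2$ uniformly in $T$, whereas you take $p\in(n,2n]$ and handle small $T$ as a separate trivial case; neither difference is substantive.
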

\begin{proof}
\subsubsection*{Arithmetization}

Assume without loss of generality that T divides $n$.
Let $q$ be a prime number with $4n \leq q \leq 8n$,
and let $\mathbb{F}_q$ denote the prime field of size $q$.
We identify  $[\frac{n}{T}]\times[T]$ with the universe $[n]$ over which Alice and Bob want to compute Set Disjointness. 

Alice's input can now be represented as $T$ functions
$\psi_{\alpha,t}:[\frac{n}{T}]\rightarrow\left\{ 0,1\right\} $
as follows: $\psi_{\alpha,t}(i)\triangleq 1$ if and only if the element corresponding to $(i,t)$ is in Alice's set.  Define
$\psi_{\beta,t}$ analogously. Notice that their sets are disjoint  
if and only if $\psi_{\alpha,t}(i)\psi_{\beta,t}(i)=0$ for all
$ i,t \in[\frac{n}{T}]\times[T]$.

Extend each $\psi_{\alpha,t},\psi_{\beta,t}$ to 
polynomials $\Psi_{\alpha,t},\Psi_{\beta,t}:\mathbb{F}_q\rightarrow \mathbb{F}_q$
of degree at most $\frac{n}{T}-1$ in every variable.
Notice that the polynomials $\Psi_t(i)\triangleq\Psi_{\alpha,t}(i)\cdot\Psi_{\beta,t}(i)$
have degree at most $2(\frac{n}{T}-1)$. 
The same degree bound also holds for the sum of those polynomials, 
$\Psi \triangleq \sum_{t=1}^T \Psi_t$. 
Notice that the sets are disjoint if and only if $\Psi_t(i)=0$ for all $i,t \in [\frac{n}{T}]\times[T]$.
Similarly%
\footnote{Here we use the fact that $\mathbb{F}_q$ has a large characteristic,
and $\Psi_t(i)\in\left\{ 0,1\right\} $ for all
$i,t\in [\frac{n}{T}]\times[T]$; so the summation (in $\mathbb{F}_q$)
of $T$ zeros and ones is equal to zero if and only if there are no ones.%
}, 
the sets are disjoint
if and only if $\Psi(i)=0$ for all $ i \in [\frac{n}{T}]$.

\subsubsection*{The protocol}
We begin with a succinct formal description of the protocol, and provide more details below.

\begin{enumerate}
\item Merlin sends Alice $\Phi$ which is allegedly equal to the marginal sums polynomial $\Psi$.
\item Alice and Bob jointly draw $i \in \mathbb{F}_q$ uniformly at random.
\item Bob sends Alice $\Psi_{\beta,t}(i)$ for every $t \in [T]$.
\item Alice accepts if and only if both of the following hold:
\begin{gather}
\forall i \in \left[\frac{n}{T} \right] \;\;\ \Phi(i)=0 \label{eq:Phi}\\
 \Phi(i) = \sum_{t = 1}^T\Psi_{\alpha,t}(i) \cdot \Psi_{\beta,t}(i).\label{eq:test2}
\end{gather}
\end{enumerate}

Recall that Merlin knows both $\Psi_{\alpha,t}$ and $\Psi_{\beta,t}$. In the first step of the protocol, Merlin sends Alice 
a polynomial $\Phi$ which is allegedly equal to $\Psi$. Notice that $\Psi$ is a (univariate) polynomial of degree at most $2(\frac{n}{T}-1)$; thus it can be uniquely specified by $2 \frac{n}{T}-1$ coefficients in $\mathbb{F}_q$. 
Since each coefficient only requires $\log_2{|\mathbb{F}_q|} = \log_2 n +O(1)$ bits, the bound on Merlin's message follows.

In the second step of the protocol Alice and Bob draw $i \in \mathbb{F}_q$ uniformly at random. 
In the third step of the protocol, Bob sends Alice the values of $\Psi_{\beta,\cdot}(i)$. 
In particular sending $T$ values in $\mathbb{F}_q$ requires $O(T \log{n})$ bits.

\subsubsection*{Analysis}

\begin{description}
\item[Completeness]
If the sets are disjoint, Merlin can send the true $\Psi$, and Alice always accepts.
\item[Soundness]
If the sets are not disjoint, Merlin must send a different low degree polynomial (since \eqref{eq:Phi} is false for $\Psi$).
By the Schwartz-Zippel Lemma, since $\Psi$ and $\Phi$ have degree less than $2 \frac{n}{T} \leq \frac{q}{2}$, if they are distinct they must differ on at least half of $\mathbb{F}_q$. Hence, \eqref{eq:test2} is false with probability $\geq \frac{1}{2}$. 
(The same holds if the sets are disjoint but Merlin sends the wrong $\Phi$.)
\end{description}
\end{proof}

In the following corollary we amplify the soundness of the communication protocol, to obtain stronger computational hardness results.

\begin{cor}\label{cor:CC}
There exists a computationally efficient \MA-protocol for Set Disjointness s.t.:
\begin{enumerate}
\item Merlin sends Alice $o(n)$ bits; \item Alice and Bob jointly toss  $o(n)$ coins; 
\item Bob sends Alice $o(n)$ bits.
\item Alice returns \textup{Accept} or \textup{Reject}.
\end{enumerate}
If the sets are disjoint, there is a unique message from Merlin such that Alice always accepts; 
otherwise (if the sets have a non-empty intersection or Merlin sends the wrong message), Alice rejects with probability $\geq 1-1/2^{n^{1-o(1)}}$. 
\end{cor}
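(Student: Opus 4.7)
The plan is to amplify the soundness of Theorem~\ref{thm:CC} by parallel repetition, while being careful that Merlin's message is \emph{not} repeated. Concretely, I would run the protocol of Theorem~\ref{thm:CC} with parameter $T$, where Merlin sends a single polynomial $\Phi$ once, and then Alice and Bob independently repeat steps 2--4 of that protocol $k$ times, each time drawing fresh random coins $i^{(1)},\dots,i^{(k)}\in\mathbb{F}_q$ and having Bob send the corresponding evaluations $\Psi_{\beta,t}(i^{(j)})$. Alice accepts iff all $k$ verification instances accept using the same single $\Phi$ she received from Merlin.

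For the parameter choice, I would take $T=\log^{2}n$ and $k=n/\log^{10}n$. Then Merlin's total message is $O(n\log n/T)=O(n/\log n)=o(n)$ (unaffected by repetition); the total number of coins is $k\cdot(\log_2 n+O(1))=O(n/\log^{9}n)=o(n)$; and Bob's total message is $k\cdot O(T\log n)=O(n/\log^{7}n)=o(n)$. Moreover, $k=n/\log^{10}n=n^{1-o(1)}$, since $\log^{10}n=n^{o(1)}$.

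For completeness, if the sets are disjoint then Merlin sends $\Phi=\Psi$ and every one of the $k$ independent verifications accepts with probability $1$. Moreover, this $\Phi$ is the \emph{unique} message with this property: if $\Phi'\neq\Psi$ is any other polynomial of the prescribed degree, then by Schwartz--Zippel (as in the proof of Theorem~\ref{thm:CC}) each single repetition rejects with probability $\geq 1/2$, so all $k$ repetitions accept with probability at most $2^{-k}<1$. For soundness, if the sets intersect, or Merlin sends any $\Phi\neq\Psi$, then in each of the $k$ independent repetitions Alice rejects with probability $\geq 1/2$; since the coins across repetitions are independent (and the \emph{same} $\Phi$ is checked in all of them), Alice accepts all $k$ with probability at most $2^{-k}=1/2^{n^{1-o(1)}}$, yielding the claimed soundness $1-1/2^{n^{1-o(1)}}$.

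There is no serious technical obstacle; the only delicate point is to ensure that Merlin's single message is \emph{not} repeated (otherwise it would blow up to $\Theta(n)$ bits under parallel repetition) and that the soundness argument still applies with a \emph{single} $\Phi$ being reused across the $k$ verifications. This is fine because the soundness of one repetition of Theorem~\ref{thm:CC} holds for every fixed $\Phi$ chosen before the random coins, so independence of the coins across repetitions is all that is needed for the union-bound-style amplification.
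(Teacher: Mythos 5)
Your proof is correct and takes essentially the same approach as the paper: fix Merlin's single message (a polynomial $\Phi$) and repeat only the randomized verification steps, choosing parameters so that each of Merlin's message, the coins, and Bob's message stays $o(n)$ while the repetition count stays $n^{1-o(1)}$. The paper uses $T=\log^2 n$ with $R=n/T^2=n/\log^4 n$ repetitions (versus your $k=n/\log^{10}n$), which is an inconsequential difference; in fact you go slightly beyond the paper by explicitly justifying the uniqueness of the accepting Merlin message, which the paper states in the corollary but does not spell out in the proof.
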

\begin{proof}
Let $T$ be a small super-logarithmic function of $n$, e.g. $T = \log^2 n$. 
Repeat the protocol from Theorem~\ref{thm:CC} $R = \frac{n}{T^2}$ times to amplify the soundness.
Notice that since Merlin sends her message before the random coins are tossed, it suffices to only repeat steps 2-4.
Thus Merlin still sends $O(\frac{n}{T} \cdot \log{n}) = o(n)$ bits, Alice and Bob toss a total of $O(R \cdot \log n) = o(n)$ coins, and Bob sends a total of $O(R \cdot T  \cdot\log{n}) = o(n)$ bits.
\end{proof}

\begin{remark}
An alternative way to obtain Corollary~\ref{cor:CC} is via a ``white-box'' modification of the protocol from Theorem~\ref{thm:CC}: all the polynomials remain the same, but we consider their evaluation over an extension field $\mathbb{F}_{q^R}$. Note that Merlin's message (polynomial) remains the same as the coefficients are still in $\mathbb{F}_q$, but Bob's message is now $R$-times longer (as in the proof of Corollary~\ref{cor:CC}).
\end{remark}


\section{A distributed and non-deterministic PCP theorem}\label{sec:PCP}

In this section we prove our distributed, non-deterministic PCP theorem.

\begin{thm}[Distributed, Non-deterministic PCP Theorem]\label{thm:PCP}
Let $\varphi$ be a Boolean CNF formula with $n$ variables and $m = O(n)$ clauses.
There is a non-interactive protocol where:
\begin{itemize}
\item Alice, given the CNF $\varphi$, partial assignment $\alpha \in \{0,1\}^{n/2}$, and advice $\mu \in \{0,1\}^{o(n)}$, outputs a string $a^{\alpha,\mu} \in \{0,1\}^{2^{o(n)}}$.
\item Bob, given $\varphi$ and partial assignment $\beta \in \{0,1\}^{n/2}$, outputs a string $b^{\beta} \in \{0,1\}^{2^{o(n)}}$.
\item The verifier, given input $\varphi$, tosses $o(n)$ coins, non-adaptively reads $o(n)$ bits from $b^{\beta}$, and adaptively reads one bit from $a^{\alpha,\mu}$; finally, the verifier returns \textup{Accept} or \textup{Reject}.
\end{itemize}
If the combined assignment $(\alpha, \beta)$ satisfies $\varphi$, there exists advice $\mu^{*}$ such that the verifier always accepts. 
Otherwise (in particular, if $\varphi$ is unsatisfiable), for every $\mu$, the verifier rejects with probability $\geq1-1/2^{n^{1-o(1)}}$.
\end{thm}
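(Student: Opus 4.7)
The plan is to reduce the verification that $(\alpha,\beta)$ satisfies $\varphi$ to an instance of Set Disjointness between Alice and Bob, then invoke the amplified \MA~communication protocol from Corollary~\ref{cor:CC}, and finally ``flatten'' that protocol into a distributed PCP by pre-computing each party's response for all possible coin outcomes (for Bob) and all possible Bob-messages (for Alice).

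First I would set up the reduction from SAT verification to Set Disjointness. Given $\varphi$ with $m=O(n)$ clauses, let $S_{\alpha}\subseteq [m]$ be the set of clauses not satisfied by the partial assignment $\alpha$, and let $T_{\beta}\subseteq [m]$ be defined analogously for $\beta$; each party can compute its own set locally from its partial assignment and from $\varphi$. Then $(\alpha,\beta)$ satisfies $\varphi$ if and only if $S_{\alpha}\cap T_{\beta}=\emptyset$. Applying Corollary~\ref{cor:CC} with universe size $m=O(n)$ yields a protocol in which Merlin sends Alice a message $\mu\in\{0,1\}^{o(n)}$, Alice and Bob jointly toss $o(n)$ coins producing an outcome $r$, Bob sends Alice a message $\mathrm{msg}_{\beta}(r)\in\{0,1\}^{o(n)}$, and Alice deterministically outputs a bit $\mathrm{dec}_{\alpha,\mu}(r,\mathrm{msg}_{\beta}(r))$; moreover the completeness and near-polynomial soundness of this protocol are exactly the guarantees promised by the theorem.

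Next I would build the PCP strings by enumerating all coin outcomes and all possible Bob-messages. Bob's string $b^{\beta}$ is the concatenation, over the $2^{o(n)}$ possible coin outcomes $r$, of his $o(n)$-bit message $\mathrm{msg}_{\beta}(r)$, giving a string of total length $2^{o(n)}\cdot o(n)=2^{o(n)}$ bits that Bob computes locally from $\beta$ and $\varphi$. Alice's string $a^{\alpha,\mu}$ is indexed by pairs $(r,w)$ with $r$ a coin outcome and $w\in\{0,1\}^{o(n)}$ an alleged Bob-message; the entry at index $(r,w)$ is the single bit $\mathrm{dec}_{\alpha,\mu}(r,w)$, so there are $2^{o(n)}\cdot 2^{o(n)}=2^{o(n)}$ entries in total and $a^{\alpha,\mu}\in\{0,1\}^{2^{o(n)}}$. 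Alice computes each such bit locally from $\alpha$, $\mu$, and $\varphi$, without ever seeing $\beta$.

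The verifier now operates as follows: toss $o(n)$ coins to obtain $r$; non-adaptively read the $o(n)$ bits of $b^{\beta}$ encoding $\mathrm{msg}_{\beta}(r)$ (the locations depend only on $r$, which is fixed before any read); then adaptively read the single bit of $a^{\alpha,\mu}$ at index $(r,\mathrm{msg}_{\beta}(r))$; and output that bit. Completeness and soundness transfer directly from Corollary~\ref{cor:CC}: if $(\alpha,\beta)$ satisfies $\varphi$ there is an honest Merlin message $\mu^{*}$ (the true low-degree polynomial from the protocol) making the verifier accept with probability one, and otherwise the verifier rejects with probability at least $1-1/2^{n^{1-o(1)}}$ for every $\mu$. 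The main obstacle here is conceptual rather than technical: one must align the adaptivity structure so that Bob's message is queried non-adaptively from the coins and Alice's decision is queried adaptively from the already-read Bob-message, matching the exact access pattern claimed by the theorem. Once the \MA~protocol is in hand, the remainder is bookkeeping to confirm the $2^{o(n)}$ proof length and the $o(n)$ query complexity.
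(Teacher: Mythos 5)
Your proposal is correct and matches the paper's proof essentially step for step: you reduce satisfiability of $(\alpha,\beta)$ to disjointness of the ``unsatisfied-by-my-half'' clause sets $S_\alpha, T_\beta$, invoke Corollary~\ref{cor:CC}, and flatten the \MA~protocol into PCP strings indexed by coin outcomes (for Bob) and by coin-outcome/alleged-message pairs (for Alice), with the verifier reading Bob's message non-adaptively and then one bit of Alice's table adaptively. The only (cosmetic) difference is that the paper states the definition of $S_\alpha$ slightly more carefully---the clauses in which no literal receives a positive assignment from $\alpha$---but this is exactly what you intend by ``not satisfied by the partial assignment.''
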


\begin{proof}
For any partial assignments $\alpha,\beta\in\left\{ 0,1\right\} ^{n/2}$,
we consider the induced sets $S_{\alpha}, T_{\beta} \subseteq [m]$, where $j \in S_{\alpha}$ iff {\em none} of the literals in the $j^{th}$ clause receive a positive assignment from $\alpha$ (i.e. each literal is either set to false, or not specified by the partial assignment). Define $T_{\beta}$ analogously.
Notice that the joint assignment $(\alpha,\beta)$ satisfies $\varphi$ iff the corresponding sets are disjoint.
The construction below implements the \MA~communication protocol for Set Disjointness from Corollary~\ref{cor:CC} with inputs $S_{\alpha}, T_{\beta}$.

\subsubsection*{Constructing the PCP}

Bob's PCP is simply a list of all messages that he could send on the \MA~communication protocol, depending on the random coin tosses.
Formally, let $L$ enumerate over the outcomes of Alice and Bob's coin tosses in the protocol ($|L| \leq 2^{o(n)}$).
For each $\ell \in L$, we let ${b^{\beta}}_\ell$ be the message Bob sends on input $T_{\beta}$ and randomness $\ell$.

Alice's PCP is longer: she writes, for each possible outcome of the coin tosses, a list of all messages from Bob that she is willing to accept.
Formally, let $K$ enumerate over all possible Bob's messages (in particular,  $|K| = 2^{o(n)}$).
For each $k \in K$, if Alice accepts given message $\mu$ from Merlin, randomness $\ell$, and message $k$ from Bob, we set ${a^{\alpha, \mu}}_{\ell,k} \triangleq 1$.
Otherwise (if Alice rejects), we set ${a^{\alpha, \mu}}_{\ell,k} \triangleq 0$.

The verifier chooses $\ell \in L$ at random, reads ${b^{\beta}}_\ell$  and then ${a^{\alpha, \mu}}_{\ell,{b^{\beta}}_\ell}$ (i.e. accesses ${a^{\alpha, \mu}}_{\ell,\cdot}$ at index ${b^{\beta}}_\ell$). The verifier accepts iff 
$${a^{\alpha, \mu}}_{\ell,{b^{\beta}}_\ell} = 1.$$

\subsubsection*{Analysis}
Observe that for each $\ell$, we have that ${a^{\alpha, \mu}}_{\ell,{b^{\beta}}_\ell} = 1$ iff and Alice accepts (given Alice and Bob's respective inputs, Merlin's message $\mu$, randomness $\ell$, and Bob's message ${b^{\beta}}_\ell$). 
Therefore, the probability that the PCP verifier accepts is exactly equal to the probability that Alice accepts in the \MA~communication protocol.
\end{proof}


\section{PCP-Vectors}\label{sec:main}

In this section we introduce an intermediate problem which we call {\sc PCP Vectors}.
The purpose of introducing this problem is to abstract out the prover-verifier formulation
before proving hardness of approximation in \P, 
very much like $\NP$-hardness of approximation reductions start from gap-3-SAT or {\sc Label Cover}.

\begin{defn}
[{\sc PCP-Vectors}] The input to this problem consists of two sets
of vectors $A\subset\Sigma^{L\times K}$ and $B\subset\Sigma^{L}$,
The goal is to find vectors $a\in A$ and $b\in B$ that maximize
\begin{gather}\label{eq:score1}
s\left(a,b\right)\triangleq \Pr_{\ell \in L}\left[\bigvee_{k \in K}  \left(a_{\ell,k} = b_{\ell}\right)\right].
\end{gather}
\end{defn}

\begin{thm}
\label{thm:main}Let $\varepsilon>0$ be any constant, and
let $\left(A,B\right)$ be an instance of {\sc PCP-Vectors} with
$N$ vectors and parameters $|L|,|K|,|\Sigma|=N^{o\left(1\right)}$.
Then, assuming SETH, $O\left(N^{2-\varepsilon}\right)$-time
algorithms cannot distinguish between:
\begin{description}
\item [{Completeness}] there exist $a^{*},b^{*}$ such that $s\left(a^{*},b^{*}\right)=1$;
and
\item [{Soundness}] for every $a\in A,b\in B$, we have $s\left(a,b\right) \leq 1 /2^{(\log N)^{1-o(1)}}$.
\end{description}
\end{thm}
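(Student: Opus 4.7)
The plan is to reduce $k$-SAT to gap {\sc PCP-Vectors} via the standard split-and-list approach, with the distributed PCP of Theorem~\ref{thm:PCP} supplying the gap. First, by the Sparsification Lemma it suffices to start from a $k$-CNF $\varphi$ on $n$ variables with $O(n)$ clauses. Apply Theorem~\ref{thm:PCP} to $\varphi$; let $r = o(n)$ be the number of coins the verifier tosses, $q_B = o(n)$ the number of bits it reads non-adaptively from Bob's string $\tilde b^\beta$, and for each randomness $\ell \in \{0,1\}^r$ and Bob-bits $y \in \{0,1\}^{q_B}$ let $p(\ell,y)$ denote the adaptive probe into Alice's string $\tilde a^{\alpha,\mu}$. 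Without loss of generality the verifier's decision on $(\ell, y, z)$ (where $z$ is the adaptive bit) falls into one of three cases: (i) always accept, (ii) always reject, or (iii) accept iff $z = 1$; the remaining ``accept iff $z = 0$'' case is reduced to (iii) by complementing the probed bit.

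Next I build the {\sc PCP-Vectors} instance. Set $L = \{0,1\}^r$, $K = \{0,1\}^{q_B}$, and $\Sigma = K \cup \{\perp\}$; all three have size $2^{o(n)}$. For each $\alpha \in \{0,1\}^{n/2}$ and each advice string $\mu \in \{0,1\}^{o(n)}$, include a vector $a^{\alpha, \mu} \in A$ whose $(\ell, k)$-entry is $k$ in case (i); $\perp$ in case (ii); $k$ in case (iii) if $\tilde a^{\alpha,\mu}_{p(\ell,k)} = 1$ and $\perp$ otherwise. For each $\beta \in \{0,1\}^{n/2}$ include a vector $b^\beta \in B$ with $b^\beta_\ell \in K$ equal to the $q_B$ bits that Bob's string $\tilde b^\beta$ reveals on randomness $\ell$. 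The key property is that $a^{\alpha,\mu}_{\ell, k} = b^\beta_\ell$ holds precisely when $k = b^\beta_\ell$ (so $k$ is Bob's true answer) \emph{and} the PCP verifier accepts, so $\bigvee_k a^{\alpha,\mu}_{\ell, k} = b^\beta_\ell$ iff the verifier accepts at randomness $\ell$, making $s(a^{\alpha,\mu}, b^\beta)$ exactly the verifier's acceptance probability.

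Completeness and soundness now inherit directly from Theorem~\ref{thm:PCP}. If $(\alpha^*, \beta^*)$ satisfies $\varphi$, the guaranteed advice $\mu^*$ gives $s(a^{\alpha^*, \mu^*}, b^{\beta^*}) = 1$; if $\varphi$ is unsatisfiable then every triple $(\alpha, \beta, \mu)$ yields $s \leq 1/2^{n^{1-o(1)}}$. For the SETH reduction, count vectors: $|A| = 2^{n/2 + o(n)}$ and $|B| = 2^{n/2}$, so $N = 2^{n/2 + o(n)}$, $\log N = n/2 + o(n)$, and $1/2^{n^{1-o(1)}} = 1/2^{(\log N)^{1-o(1)}}$. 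Moreover $|L|, |K|, |\Sigma| = 2^{o(n)} = N^{o(1)}$. An $O(N^{2-\varepsilon})$ algorithm deciding the gap would then solve $k$-SAT in $2^{(1-\varepsilon/2)n + o(n)} \leq 2^{(1-\varepsilon/3)n}$ time for $n$ large; since $k$ is free (taken from the Sparsification Lemma), this refutes SETH.

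The step I expect to take the most thought is the encoding itself: {\sc PCP-Vectors} scores an OR-of-equalities, whereas the PCP verifier reads a single bit from Alice, and the $\perp$-vs-$k$ trick is what converts the latter into the former while uniformly handling both outcomes of the adaptive bit. Everything else is accounting: each $o(n)$ quantity produced by Theorem~\ref{thm:PCP} is also $o(\log N)$, which lets the soundness factor, alphabet/index sizes, and running time all slot into the statement without additional work.
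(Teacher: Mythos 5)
Your proof is correct and follows essentially the same route as the paper: build $A$ and $B$ from Alice's and Bob's PCP strings of Theorem~\ref{thm:PCP}, use the $k$-vs-$\perp$ encoding so that $a^{\alpha,\mu}_{\ell,k} = b^\beta_\ell$ exactly when $k$ is Bob's true message and the verifier accepts, and then run the standard SETH accounting with the Sparsification Lemma. The one cosmetic difference is that you treat Theorem~\ref{thm:PCP} as a black box (hence the case analysis (i)--(iii) on the verifier's decision rule and the complementation trick), whereas the paper's proof uses the internal structure of its PCP construction, in which ${\widehat{a}^{\alpha,\mu}}_{\ell,k}$ is already defined as Alice's accept/reject indicator, so the verifier is always in your case (iii) and no case split is needed; both treatments are correct and lead to the identical instance.
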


\begin{proof}
Let $\varphi$  be a CNF formula with $n$ variables and $m$ clauses
(without loss of generality $m=\Theta\left(n\right)$ by the Sparsification Lemma~\cite{IPZ01-ETH}).
Let $B$ be the sets of vectors generated by Bob in the distributed, non-deterministic PCP from Theorem~\ref{thm:PCP}, where we think of each substring ${b^{\beta}}_{\ell} \in \{0,1\}^{o(n)}$ as a single symbol in $\Sigma$. 
Similarly, let $\widehat A$ be the set of vectors generated by Alice.
For each $\alpha, \mu$, we modify $\widehat{a}^{\alpha, \mu}$ as follows:
$$
{a^{\alpha, \mu}}_{\ell, k} \triangleq
\begin{cases}
k & \text{if ${\widehat{a}^{\alpha, \mu}}_{\ell, k} = 1$}\\
\perp & \text{if ${\widehat{a}^{\alpha, \mu}}_{\ell, k} = 0$}
\end{cases}
.$$
By definition, $s\left({a^{\alpha, \mu}}_{\ell,k}, {b^{\beta}}_{\ell} \right)$ is exactly equal to the probability that the verifier accepts.
\end{proof}

\section{Max Inner Product and Subset Queries}

In this section, we present our first application of our PCP-Theorem by giving an extremely simple reduction from our {\sc PCP-Vectors} problem to the {\sc Bichromatic Max Inner Product} problem (and its special case, the Subset Query problem) from Section~\ref{sec:intro}. 

\begin{proof}(of Theorem~\ref{thm:subset})
Given an instance $A',B'$ of {\sc PCP-Vectors} as in Theorem~\ref{thm:main} with parameters $|L|,|K|,|\Sigma|=N^{o(1)}$ we map each vector $a' \in A' \subseteq \Sigma^{L\times K}$ to a subset $a$ of $U = L \times \Sigma$ in the following natural way.
For all $\ell \in [L], k \in [K]$ and $\sigma \in \Sigma$ we set add $(\ell,\sigma)$ to $a$ iff there is a $k \in K$ such that $a'_{\ell,k}= \sigma$.
We map the vectors $b' \in B$ to a subset $b$ of $U$ by adding the element $(\ell,\sigma)$ to $b$ iff $b'_{\ell}=\sigma$.
Note that the universe has size $d=|L|\cdot |\Sigma|=N^{o(1)}$ and the sets $b$ have size $|L|$.

The completeness and soundness follow from the fact that for any two vectors $a'\in A', b' \in B'$:
$$
| a \cap b| = |L| \cdot s(a',b').
$$
\end{proof}

\subsection{MIPS}

Next, we show the corollary for the search version of the problem known as MIPS.
This reduction from the (offline) closest pair problem to the (online) nearest neighbor problem is generic and works for all the problems we discuss in this paper.
The proof is based on a well-known technique~\cite{WW10-subcubic}, but might be surprising when seen for the first time.

\begin{proof} (of Corollary~\ref{cor:MIPS})
Assume we have a data structure as in the statement, and we will show how to solve {\sc Bichromatic Max Inner Product} instances on $2N$ vectors as in Theorem~\ref{thm:subset}, refuting SETH.
Let $c$ be such that $O(n^c)$ is an upper bound on the (polynomial) preprocessing time of the data structure.
Set $x=1/2c$ and note that it is small but constant. 
We partition the set $B$ of vectors into $t=N^{1-x}$ buckets of size $N^x$ each $B_1,\dots,B_t$.
For each bucket $B_i$ for $i \in [t]$ we use our data structure to do the following:
\begin{enumerate}
\item Preprocess the $n=N^{x}$ vectors in $B_i$.
\item For each of our $N$ vectors $a \in A$, ask the query to see if $a$ is close to any vector in $B_i$.
\end{enumerate}
Observe that after these $t$ checks are performed, we are guaranteed to find the high-inner product pair, if it exists.
The total runtime is $N^{1-x}$ times the time for each stage:
$$
N^{1-x} \cdot ( n^c + N \cdot n^{1-\eps}) = N^{1+x(c-1)} + N^{2-\eps x}
$$
which is $O(N^{2-\eps'})$ for some constant $\eps'>0$, refuting SETH.
\end{proof}

\subsection{Maximum Inner Product over $\{-1,1\}$}
We shall now prove that our results extend to the variant of {\sc Bichromatic Max Inner Product}  where the vectors are in $\{-1,1\}^d$ and the goal is to maximize the {\em absolute value} of the inner product.

\begin{cor}\label{cor:signed-maxip}
Assuming SETH, the following holds for every constant $\eps>0$. Given two sets of vectors $A,B \in \{-1,1\}^d$, where $|A| = |B| = N$, and $d = N^{o(1)}$, any $O(N^{2-\eps})$ time algorithm for computing 
$\max_{\substack{a \in A \\ b \in B}} |a \cdot b|$ 
must have approximation factor at least $2^{(\log{N})^{1-o(1)}}$.
\end{cor}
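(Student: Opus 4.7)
My plan is to reduce from the $\{0,1\}$-valued {\sc Max-IP} hardness implied by Theorem~\ref{thm:subset}. Given the hard instance $(A,B)$ over $\{0,1\}^d$ with $d=N^{o(1)}$, where the ``Yes'' case achieves inner product $k$ (the pattern size in the Subset-Query formulation) and the ``No'' case has all pairs of inner product at most $k/s$ for $s \triangleq 2^{(\log N)^{1-o(1)}}$, I would blow up each coordinate into a block of length $r = N^{o(1)}$ using independent random $\pm 1$ noise for the two sides, exactly following the reduction suggested in the paper's footnote. The goal is to turn the additive gap $k$ vs $k/s$ for $\{0,1\}$-inner product into the same multiplicative gap $rk$ vs $rk/s$ for $|\langle \cdot,\cdot\rangle|$ over $\{\pm 1\}^{dr}$.

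Concretely, I would draw, \emph{once}, independent uniform strings $R^A_1,\dots,R^A_d,R^B_1,\dots,R^B_d \in \{-1,+1\}^r$ and define encodings $\phi_A:\{0,1\}^d \to \{-1,+1\}^{dr}$ and $\phi_B:\{0,1\}^d \to \{-1,+1\}^{dr}$ whose $i$-th block of length $r$ is $1^r$ if the $i$-th coordinate is $1$, and is $R^A_i$ (respectively $R^B_i$) otherwise. A direct computation then yields
\begin{equation*}
\langle \phi_A(a), \phi_B(b)\rangle \;=\; r\cdot \langle a,b\rangle \;+\; Z_{a,b},
\end{equation*}
where $Z_{a,b}$ is a sum of at most $dr$ independent uniform $\pm 1$ random variables (contributions from positions of type $10$, $01$, or $00$; positions of type $11$ contribute $r$ deterministically, giving the first term).

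The key quantitative step is concentration: Hoeffding's inequality gives $\Pr[|Z_{a,b}|>t] \le 2\exp(-t^2/(2dr))$, so taking $t = C\sqrt{dr \log N}$ for a sufficiently large constant $C$ and union bounding over all $N^2$ pairs $(a,b) \in A\times B$ shows that, with probability $1-o(1)$ over the reduction's coins, $|Z_{a,b}| \le t$ for \emph{every} pair simultaneously. Since $d,k,s = N^{o(1)}$, I can choose $r = N^{o(1)}$ large enough that $t \ll rk/s$, keeping the target dimension at $N^{o(1)}$. Then in the completeness case the planted pair satisfies $|\langle \phi_A(a^*),\phi_B(b^*)\rangle| \ge rk-t = rk(1-o(1))$, while in the soundness case every pair satisfies $|\langle \phi_A(a),\phi_B(b)\rangle| \le rk/s + t = (rk/s)(1+o(1))$. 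Hence any $(s/(1+o(1)))$-approximation algorithm for absolute-value Max-IP over $\{-1,+1\}^{dr}$ distinguishes the two cases, and plugging into Theorem~\ref{thm:subset} refutes SETH.

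There is no real obstacle here: both the encoding and the concentration analysis are standard, and the only thing to verify carefully is that the chosen $r$ preserves the dimension bound $N^{o(1)}$ (which it does, since $s^2 = N^{o(1)}$). The randomness in the reduction can either be absorbed into a Monte Carlo refutation of SETH, or derandomized by a small-bias construction; I would opt for the former for brevity.
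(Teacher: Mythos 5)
Your argument is correct, but it takes a genuinely different — and heavier — route than the paper. The paper replaces each bit by a fixed vector in $\{-1,1\}^4$: $1 \mapsto \gamma_1=(1,1,1,1)$ on both sides, $0\mapsto\alpha_0=(1,1,-1,-1)$ on the $A$-side, and $0\mapsto\beta_0=(1,-1,1,-1)$ on the $B$-side. Since $\gamma_1\cdot\gamma_1=4$ while $\alpha_0\cdot\gamma_1=\gamma_1\cdot\beta_0=\alpha_0\cdot\beta_0=0$, this gives the \emph{exact} identity $a\cdot b = 4\,a'\cdot b'$ with a deterministic blow-up of $4$ and no error term to control. Your reduction — random $\pm 1$ blocks for $0$'s and $1^r$ for $1$'s — is the construction sketched in the paper's own footnote and is sound: the block contributions decompose as claimed, $Z_{a,b}$ really is a sum of at most $dr$ independent signs, Hoeffding with $t=C\sqrt{dr\log N}$ and a union bound over $N^2$ pairs works, and choosing $r$ polynomially larger than $s^2 d\log N/k^2$ (still $N^{o(1)}$) keeps $t\ll rk/s$. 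What you pay for this is a randomized reduction (so only a Monte Carlo refutation of SETH, or an extra derandomization step via small-bias sets), a $N^{o(1)}$ rather than constant dimension blow-up, and an analysis that requires careful parameter bookkeeping. The paper's gadget buys determinism, exactness, and constant overhead for free; your version is more generic but costs you concentration bounds. Either is acceptable, but if you know the deterministic trick it is strictly cleaner here.
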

\begin{proof}
Starting with a hard $(A', B')$ instance of {\sc Bichromatic Max Inner Product} over $\{0,1\}^{d/4}$, we construct $(A,B)$ as follows. Consider the following three vectors in $\{-1,1\}^4$:
\begin{align*}
\gamma_1 & \triangleq (1,1,1,1) \\
\alpha_0 & \triangleq (1, 1, -1, -1) \\
\beta_0 & \triangleq (1, -1, 1, -1).
\end{align*}
Notice that $\gamma_1 \cdot \gamma_1= 4$, but $\alpha_0 \cdot \gamma_1 = \gamma_1 \cdot \beta_0= \alpha_0 \cdot \beta_0 = 0$.

We replace each entry in each vector in $A',B'$ with one of $\gamma_1, \alpha_0, \beta_0$ as follows: all $1$'s are replaced by $\gamma_1$, all $0$'s in $A$-vectors are replaced by $\alpha_0$, and all $0$'s in $B$-vectors are replaced by $\beta_0$. 
For vectors $a'\in A', b'\in B'$, let $a,b \in \{-1,1\}^d$ denote the new vectors that result from the reduction. Observe that now $a \cdot b = 4 a'\cdot b'$. 
\end{proof}


\section{Permutation LCS}

In this section, we prove that closest pair under the Longest Common Subsequence (LCS) similarity measure cannot be solved in truly subquadratic time, even when allowed near-polynomial approximation factors, and even when the input strings are restricted to be permutations. As a direct corollary, we show that approximate nearest neighbor queries under LCS cannot be computed efficiently: one must spend time proportional to the number of strings in the database.

\begin{defn}
[{\sc Bichromatic LCS Closest Pair Problem}]
Given two sets of strings $X,Y \subseteq \Sigma^{m}$ over an alphabet $\Sigma$, the goal is to find a pair $x \in X, y \in Y$ that maximize $LCS(x,y)$.
\end{defn}

If the optimal solution to the LCS Closest Pair problem is $OPT$, a $c$-approximation algorithm is allowed to return any value $L$  such that $OPT/c \leq L \leq OPT$, corresponding to a pair of strings that are $c$-away from the closest pair.

The rest of this section is dedicated to the proof of Theorem~\ref{thm:lcs} from the Introduction.

\begin{proof}(of Theorem~\ref{thm:lcs})

We reduce from {\sc Max-IP} to {\sc Bichromatic LCS Closest Pair} over permutations.
Specifically, for each vector $u \in A \cup B \subseteq \{0,1\}^m$ in the {\sc Max-IP} instance and each index $i \in [m]$, we encode $u_i$ as a permutation $\pi^{u}(u_i)$ over sub-alphabet $\Sigma_i$
(where $\Sigma_i$ is unique to the index $i$ and independent of the vector $u$). Our encoding (shown below) has the following guarantee: 
if $u_i = v_i = 1$, then $LCS\big(\pi^{u}(u_i), \pi^{v}(v_i)\big) = |\Sigma_i|$, and otherwise 
$LCS\big(\pi^{u}(u_i), \pi^{v}(v_i)\big) = O(\sqrt{|\Sigma_i|} \log{N})$.  

Observe that this suffices to prove our theorem for sufficiently large $|\Sigma_i|$.
The strings in our {\sc Bichromatic LCS Closest Pair} instance will be simply the concatenation of all the sub-permutations:
$$
x(u) \triangleq \bigcirc_{i=1}^{d} \pi^{u}(u_i).
$$
Notice that those strings are indeed permutations. Furthermore, because the sub-permutations use disjoint alphabets, we have:
\begin{align}
LCS\big(x(u), x(v) \big) & = \sum_{i=1}^d LCS\big(\pi^{u}(u_i), \pi^{v}(v_i)\big) \nonumber \\
& = (u \cdot v) |\Sigma_i|  + (d - u \cdot v) O(\sqrt{|\Sigma_i|}\log{N}). \label{eq:LCS-gap}
\end{align}
When setting $|\Sigma_i| = 2^{(\log N)^{1-o(1)}} d \log N$, the first term in \eqref{eq:LCS-gap} dominates the second one, and the reduction from {\sc Max-IP} follows. 

\subsubsection{Embedding bits as permutations}
We need $N+1$ different permutations over each $\Sigma_i$: the $0$-bits of different vectors should be embedded into very far strings (in LCS ``distance''), but we use the same embedding for all the $1$-bits.
First, observe that we could simply use $N+1$ random permutations; with high probability every pair will have LCS at most $O(\sqrt{|\Sigma_i|} \log N)$. Below, we show how to match this bound with a simple deterministic construction.

We assume w.l.o.g. that $|\Sigma_i|$ is a square of a prime.
Let ${\cal F}$ be the prime field of cardinality $|{\cal F}| = \sqrt{|\Sigma_i|}$.
We consider permutations over ${\cal F}^2$, sorted by the lexicographic order.
For every polynomial $p : {\cal F} \rightarrow {\cal F}$ of degree $\leq \log{N}$, 
we construct a permutation $\pi_p : {\cal F}^2 \rightarrow {\cal F}^2$ as follows:
$$ \pi_p(i,j) \triangleq (j, i + p(j)).$$
To see that those are indeed permutations, notice that we can invert them:
$$ \pi^{-1}_p(j,k) \triangleq (k-p(j), j).$$
Note also that we now have more than enough ($|{\cal F}|^{\log N} \gg N$) different permutations.

Let us argue that these permutations are indeed far in (in LCS ``distance''). Fix a pair of polynomials $p, q$, and consider the LCS of the corresponding permutations.
Suppose that for some $i,i',j$, we matched 
$$ \pi_p(i,j) = (j, i+p(j)) = (j, i'+q(j)) = \pi_{q}(i',j). $$
Then, for this particular choice of $i,i'$ the above equality holds for only $d$ distinct $j \in T$.
Thus, we can add at most $d$ elements to the LCS until we increase either $i$ or $i'$. 
Since each of those increases only $|{\cal F}|$ times, we have that
$$LCS(\pi_p, \pi_q) =O(|{\cal F}| \cdot d) = O(\sqrt{|\Sigma_i|} \cdot \log{N}).$$
\end{proof}
 

\section{Regular Expressions}
\label{sec:regexp}

In this section we prove hardness for a much simpler metric, the \emph{Hamming Distance} between strings, but where one of the sets is defined by a regular expression.
Our result is essentially the strongest possible negative result explaining why there are no non-trivial algorithms for our problem.

A regular expression is defined over some alphabet $\Sigma$ and some additional operations such as $\{ |, \circ, *, +\}$.
Our results will hold even in the restricted setting where we only allow the operations $|$ and $\circ$ (OR and concatenation), and so we will restrict the attention to this setting.
The simplest definition is recursive:
If $e_1,e_2$ are two regular expressions, and $\sigma \in \Sigma$ is a letter, then: 
\begin{itemize}
\item $e = \sigma$ is a regular expression, and its language is $\Lreg(e) = \{ \sigma \}$.
\item $e = e_1 | e_2$ is a regular expression, and $\Lreg(e) = \Lreg(e_1) \cup \Lreg(e_2)$.
\item $e = e_1 \circ e_2$ is a regular expression, and $\Lreg(e) = \{ x=x_1 \circ x_2 \mid x_1 \in \Lreg(e_1) \text{ and } x_2 \in \Lreg(e_2) \}$.
\end{itemize}

As is popular, we will assume that regular expressions are given as strings over $\Sigma \cup \{ |, \circ \}$, and the length of the regular expression is just the length of this string. 
We will often omit the concatenation notation so that $e_1e_2$ represents $e_1 \circ e_2$ (when clear from context).
Our reductions will produce regular expressions whose language will only contain strings of a fixed length $M$, which will be natural in the context of our Closest Pair problem.
Recall that the Hamming distance of two strings $x,y$ of length $m$ is $H(x,y):=  m - \sum_{i=1}^m (x[i] = y[i])$.

\begin{defn}
[{\sc RegExp Closest Pair Problem}]
Given a set of $n$ strings $Y \subseteq \Sigma^m$ of length $m$ over an alphabet $\Sigma$, and a regular expression $\tilde{x}$ of length $N$ whose language $\Lreg(\tilde{x}) \subseteq \Sigma^m$ contains strings of length $m$, 
the goal is to find a pair of strings $x \in \Lreg(\tilde{x}), y \in Y$ with minimum Hamming distance $H(x,y)$.
\end{defn}

The main result of this section proves a hardness result for this problem which holds even for instances with perfect completeness: even if an exact match exists, it is hard to find any pair with Hamming distance $O(m^{1-\varepsilon})$, for any $\varepsilon>0$.
This is essentially as strong as possible, since a factor $m$ approximation is trivial (return any string).
For the case of binary alphabets, we show that even if an exact match exists (pair of distance $0$), it is hard to find a pair of distance $(1/2-\varepsilon) \cdot m$, for any $\varepsilon>0$.

\begin{thm}
\label{thm:regexp}

Let $\varepsilon>0$ be any constant, and
let $\left(\tilde{x},Y,\Sigma \right)$ be an instance of {\sc RegExp Closest Pair} with
$|Y|=n$ strings and parameters $m,|\Sigma|=n^{o(1)}$, and $N=n^{1+o(1)}$. Then, assuming SETH, $O\left(n^{2-\varepsilon}\right)$-time
algorithms cannot distinguish between:
\begin{description}
\item [{Completeness}] there exist $x \in \Lreg(\tilde{x}),y \in Y$ such that $H(x,y)=0$;
and
\item [{Soundness}] for every $x\in \Lreg(\tilde{x}),y\in Y$, $H(x,y) = m \cdot (1- \frac{1}{ 2^{(\log{n})^{1-o(1)}} })$.
\end{description}

Furthermore, if we restrict to binary alphabets $|\Sigma|=2$, then for all $\delta>0$, assuming SETH, $O\left(n^{2-\varepsilon}\right)$-time
algorithms cannot distinguish between:
\begin{description}
\item [{Completeness}] there exist $x \in \Lreg(\tilde{x}),y \in Y$ such that $H(x,y)=0$;
and
\item [{Soundness}] for every $x\in \Lreg(\tilde{x}),y\in Y$, $H(x,y) = (1/2 - \delta)\cdot m$.
\end{description}

\begin{proof}

We start with the simpler case of large alphabets. 
The reduction essentially translates the {\sc PCP-Vectors} problem into the language of regular expressions.
Given an instance of {\sc PCP-Vectors} as in Theorem~\ref{thm:main} with $N$ vectors and parameters $d_{A},K,|\Sigma|=N^{o\left(1\right)}$, and $d_{B}=L =O(\log{N})$, 
we construct an instance of {\sc RegExp Closest Pair Problem} as follows.
Our alphabet will be the same $\Sigma$, and we will map each vector $b \in B$ into a string $y(b) \in Y$ in the obvious way: for all $\ell \in [L]$ $y(b)[\ell] := b_\ell$.
Note that we have $n=N$ strings in $Y$ and their length is $m=L=O(\log{N})$, and the alphabet has size $n^{o(1)}$.
The vectors in $A$ will be encoded with the regular expression $\tilde{x}$.
First, for all $\ell \in [L], k \in [K]$ we define the subexpression $\widetilde{x_{\ell,k}}$ to be the single letter $a_{\ell,k} \in \Sigma$. 
Then, for all $\ell \in [L]$ we define the subexpression $\widetilde{x_{\ell}}$ as the OR of all the $\widetilde{x_{\ell,k}}$ strings:
$$
\widetilde{x_{\ell}} := \widetilde{x_{\ell,1}} \  | \ \widetilde{x_{\ell,2}} | \ \cdots \ | \ \widetilde{x_{\ell,K}} 
$$
Then, we concatenate all the $\widetilde{x_{\ell}}$ gadgets into the expression 
$$ \widetilde{x(a)}:= \widetilde{x_{1}} \circ \cdots \circ \widetilde{x_{L}} $$
which encodes a vector $a \in A$.
Finally, the final regular expression $\widetilde{x}$ is the OR of all these gadgets:
$$
\widetilde{x} := \widetilde{x(a^{1})} \  | \ \widetilde{{x(a^{2})}} | \ \cdots \ | \ \widetilde{{x(a^{N})}}.
$$
Note that the length of the expression is $N'=O(NLK)=n^{1+o(1)}$.

To see the correctness of the reduction, first observe that for all $a\in A ,b\in B$ we have that:
$$
\min_{x \ \in \ \widetilde{x(a)}} H(x,y(b)) = \sum_{\ell=1}^L \left( 1- \max_{k \in [K]} (a_{\ell,k} = b_\ell) \right)
 =  L - \sum_{\ell=1}^L \bigvee_{k \in [K]} (a_{\ell,k} = b_\ell)
$$
Which follows since the language of $\widetilde{x}$ can be expressed as:
$$
\Lreg(\widetilde{x(a)}) = 
\left\{
x=a_{1,k_1}\cdots a_{L,k_L}
\mid 
k_1,\ldots,k_L \in [K]
\right\}
$$
It follows that for all $ b\in B$:
$$
\min_{x \ \in \ \widetilde{x}} H(x,y(b)) = \min_{a \in A} \left( L - \sum_{\ell=1}^L \bigvee_{k \in [K]} (a_{\ell,k} = b_\ell) \right) = L - \max_{a \in A} L\cdot s(a,b)
$$

\paragraph{Completeness.} if there is a pair $a\in A, b \in B$ with $s(a,b)=1$, then for $y(b) \in B$ we have that $\min_{x \ \in \ \widetilde{x}} H(x,y(b)) = 0$.

\paragraph{Soundness.} if for all pairs $a \in A, b \in B$ $s(a,b) =  \frac{1}{2^{(\log{n})^{1-o(1)}}}$ then for all strings $y \in Y$ and $x \in \widetilde{x}$ we have that $H(x,y) = L \cdot \left(1 - \frac{1}{2^{(\log{n})^{1-o(1)}}} \right)$.
 
 \medskip
 \paragraph{Binary Alphabets.}
To get a reduction to strings over a binary alphabet, we simply replace each letter in $\Sigma$ with a codeword from a code with large Hamming distance between any pair of codewords.
Let $\delta>0$ be an arbitrary small constant, and let $d= \polylog({|\Sigma|})$. 
We consider an error correcting code $e : \Sigma \to \{0,1\}^d$ with constant rate and relative distance $(1/2-\delta)$; i.e. for any distinct $\sigma,\sigma' \in \Sigma$ we have that $H(e(\sigma),e(\sigma')) \geq (1/2-\delta)\cdot d$ (e.g. a random code or the concatenation of Reed Solomon and Hadamard codes~\cite[Chapter 17.5.3]{AB09-book}).
We map each symbol $\sigma \in \Sigma$ in any of our strings $y\in Y$ or in the regular expression $\widetilde{x}$ into the string $e(\sigma)$.
For any two strings $x,y \in \Sigma^L$ let $x',y' \in \{ 0,1 \}^{Ld}$ be the strings after this replacement, and observe that $H(x',y') \geq H(x,y) \cdot (1/2-\delta)\cdot d$, and moreover, if $H(x,y)=0$, then $H(x',y')=0$.
 The completeness and soundness follow, and note that the lengths of the strings and the expression grow by a negligible $n^{o(1)}$ factor.
\end{proof}

\end{thm}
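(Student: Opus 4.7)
The natural plan is to reduce from {\sc PCP-Vectors} (Theorem~\ref{thm:main}), exploiting the fact that the score function $s(a,b)$ already has the ``OR over $k$'' structure that regular expression concatenation/alternation can mimic very cleanly. Given an instance $(A,B)$ with $A \subset \Sigma^{L\times K}$ and $B \subset \Sigma^L$, I would map each $b \in B$ to the string $y(b) \in \Sigma^L$ defined by $y(b)[\ell] = b_\ell$, and map each $a \in A$ to the regular expression
\[
\widetilde{x(a)} \;\triangleq\; \bigl( a_{1,1}\,|\,a_{1,2}\,|\,\cdots\,|\,a_{1,K}\bigr) \circ \bigl( a_{2,1}\,|\,\cdots\,|\,a_{2,K}\bigr) \circ \cdots \circ \bigl( a_{L,1}\,|\,\cdots\,|\,a_{L,K}\bigr),
\]
whose language is exactly the set of length-$L$ strings $a_{1,k_1}a_{2,k_2}\cdots a_{L,k_L}$ for arbitrary $k_1,\dots,k_L \in [K]$. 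The final regular expression $\widetilde{x}$ is then the alternation $\widetilde{x(a^1)} \,|\, \widetilde{x(a^2)} \,|\, \cdots \,|\, \widetilde{x(a^N)}$ over all vectors in $A$, which has total length $O(NLK) = n^{1+o(1)}$.

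The central calculation is that for fixed $a \in A$ and $b \in B$,
\[
\min_{x \,\in\, \Lreg(\widetilde{x(a)})} H(x,y(b)) \;=\; \sum_{\ell=1}^{L} \Bigl(1 - \max_{k \in [K]} \mathbb{1}[a_{\ell,k} = b_\ell]\Bigr) \;=\; L \cdot (1 - s(a,b)),
\]
because at each coordinate $\ell$ the expression independently picks the best-matching symbol among $\{a_{\ell,k}\}_k$. Taking the min over $a \in A$ as well, we get $\min_{x \in \Lreg(\widetilde{x})} H(x,y(b)) = L(1 - \max_{a} s(a,b))$. Completeness then follows directly from a $(a^*,b^*)$ witness with $s(a^*,b^*) = 1$, giving Hamming distance $0$; soundness follows from the PCP-Vectors soundness bound $s(a,b) \le 1/2^{(\log N)^{1-o(1)}}$, giving Hamming distance $\ge m(1 - 1/2^{(\log n)^{1-o(1)}})$ with $m = L$.

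For the binary-alphabet statement, I would compose the above reduction with a fixed encoding $e : \Sigma \to \{0,1\}^d$ of constant rate and relative distance $\tfrac{1}{2} - \delta$ (e.g., the concatenation of a Reed--Solomon and a Hadamard code, with $d = \polylog(|\Sigma|)$). Replace each symbol in each $y(b)$ and each $\widetilde{x(a)}$ by its codeword. Since exact matches remain exact matches (distance $0$), completeness is preserved; and since any mismatch at a $\Sigma$-coordinate contributes at least $(\tfrac{1}{2}-\delta)d$ to the binary Hamming distance, the soundness lower bound translates to $(\tfrac{1}{2}-\delta)m'$ where $m' = Ld$ is the new length. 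The parameter blowup is only a $\polylog$ factor, so all size conditions ($m, |\Sigma|, n^{o(1)}$ and expression length $n^{1+o(1)}$) are preserved.

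The only real things to verify carefully are the parameter bookkeeping: we must confirm $L, K, |\Sigma|$ from Theorem~\ref{thm:main} are indeed $N^{o(1)}$, that $m = L$ (or $Ld$ in the binary case) stays $n^{o(1)}$, and that the OR-over-$A$ construction does not inflate the expression beyond $n^{1+o(1)}$. There is no deep obstacle here; the reduction is essentially a syntactic translation of the PCP-Vectors score into regular-expression semantics, and the binary step is a standard code-composition argument.
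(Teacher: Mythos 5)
Your proposal matches the paper's proof essentially verbatim: the same encoding of $B$-vectors as strings, the same concatenation-of-alternations gadget $\widetilde{x(a)}$ with the outer OR over all $a \in A$, the same key identity $\min_{x} H(x,y(b)) = L(1-s(a,b))$, and the same error-correcting-code composition for the binary case. This is correct and is the paper's approach.
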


\section{Product metric diameter}

\begin{defn}
[{\sc Product-Metric Diameter}] The input to this problem consists of a set of vectors $X\subset\mathbb{R}^{d_{2}\times d_{\infty}}$.

The goal is to find two vectors $x,y\in X$ that maximize
\begin{gather}
\Delta_{2,\infty}(x,y) \triangleq 
	\sqrt{\sum_{i =1}^{d_2} \left( 
		\max_{j = 2}^{d_{\infty}} \left\{
				\Big| x_{i,j} - y_{i,j} \Big|
				\right\} \right)^2}.
\end{gather}
\end{defn}

\begin{thm}
\label{thm:product}
Assuming SETH, there are no $(2-\delta)$-approximation algorithms for {\sc Product-Metric Diameter} in time $O\left(N^{2-\varepsilon}\right)$, for any constants $\varepsilon,\delta>0$.
\end{thm}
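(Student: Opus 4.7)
The plan is to reduce from {\sc PCP-Vectors} (Theorem~\ref{thm:main}). Given a {\sc PCP-Vectors} instance $(A,B)$ with parameters $|L|,|K|,|\Sigma|=N^{o(1)}$, we construct a {\sc Product-Metric Diameter} instance on $N'=2N$ points in $\mathbb{R}^{d_{2}\times d_{\infty}}$ with $d_{2}=L$ and $d_{\infty}=K\cdot|\Sigma|$, indexing the inner coordinates by pairs $(k,\tau)\in[K]\times\Sigma$. For each $a\in A$ and $b\in B$, define
\[
P_{a}^{(\ell)}[(k,\tau)] \triangleq \mathbf{1}[a_{\ell,k}=\tau], \qquad Q_{b}^{(\ell)}[(k,\tau)] \triangleq -\mathbf{1}[b_{\ell}=\tau],
\]
and let the input set be $X=\{P_{a}:a\in A\}\cup\{Q_{b}:b\in B\}$.

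The first step is a per-outer-coordinate gadget lemma. Since $P_{a}\ge 0$ and $Q_{b}\le 0$ entrywise, the difference at entry $(k,\tau)$ equals $\mathbf{1}[a_{\ell,k}=\tau]+\mathbf{1}[b_{\ell}=\tau]\in\{0,1,2\}$ and takes the value $2$ exactly when $\tau=a_{\ell,k}=b_{\ell}$. Hence $\|P_{a}^{(\ell)}-Q_{b}^{(\ell)}\|_{\infty}$ equals $2$ iff the outer coordinate $\ell$ is ``satisfied'' in the {\sc PCP-Vectors} sense (some $k$ gives $a_{\ell,k}=b_{\ell}$), and equals exactly $1$ otherwise (always witnessed, for example, at $(1,a_{\ell,1})$). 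A parallel case analysis shows that the monochromatic differences satisfy $\|P_{a}^{(\ell)}-P_{a'}^{(\ell)}\|_{\infty}\le 1$ and $\|Q_{b}^{(\ell)}-Q_{b'}^{(\ell)}\|_{\infty}\le 1$ because their entries lie in $\{-1,0,1\}$.

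Aggregating across the $L$ outer coordinates under the $\ell_{2}$ norm, bichromatic pairs satisfy
\[
\Delta_{2,\infty}(P_{a},Q_{b})^{2} = 4Ls(a,b) + L(1-s(a,b)) = L\bigl(1 + 3s(a,b)\bigr),
\]
while monochromatic squared distances are at most $L$. In the completeness case some pair achieves $s(a,b)=1$, so the diameter is at least $2\sqrt{L}$. In the soundness case every pair has $s(a,b)\le 1/T$ with $T=2^{(\log N)^{1-o(1)}}$, so the diameter is at most $\sqrt{L(1+3/T)}$. The completeness-to-soundness ratio is $2/\sqrt{1+3/T}$, which exceeds $2-\delta$ for every constant $\delta>0$ once $N$ is sufficiently large, so a $(2-\delta)$-approximation running in $O(N^{2-\varepsilon})$ time would contradict Theorem~\ref{thm:main} under SETH.

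The main design obstacle is engineering the inner $\ell_{\infty}$ gadget so that satisfied bichromatic coordinates have distance \emph{exactly} $2$ while both unsatisfied bichromatic and all monochromatic coordinates have distance at most $1$: this ensures that the outer $\ell_{2}$ aggregation cleanly yields the factor-$2$ gap, and that the diameter is witnessed by a bichromatic satisfied pair. The symmetric $\pm 1$ one-hot encoding above accomplishes both constraints simultaneously, since only a simultaneous $a$-$b$ match aligns a $+1$ from $P_{a}$ with a $-1$ from $Q_{b}$ in a single entry to produce a gap of $2$, whereas every other configuration (including all monochromatic pairings) keeps entries in $\{-1,0,1\}$.
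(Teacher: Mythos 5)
Your proof is correct and takes essentially the same approach as the paper: reduce from {\sc PCP-Vectors} by encoding $A$-vectors as one-hot $\{0,1\}$ vectors and $B$-vectors as negated one-hot $\{-1,0\}$ vectors, so that the inner $\ell_\infty$ distance is $2$ on satisfied outer coordinates and $1$ otherwise, while monochromatic pairs never exceed $1$. The only (cosmetic) difference is that you index the inner coordinates by $(k,\tau)\in[K]\times\Sigma$, whereas the paper collapses the $k$-dimension by setting $x(a)_{\ell,\sigma}=\bigvee_{k}[a_{\ell,k}=\sigma]$ and uses $d_\infty=|\Sigma|$; your squared-distance formula $L(1+3s(a,b))$ is in fact stated more precisely than the paper's shorthand.
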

\begin{proof}
We reduce from {\sc PCP-Vectors} over alphabet $\Sigma = N^{o(1)}$.

For every vector $a \in A \subset \Sigma^{L \times K}$, we construct a binary vector $x(a) \in \{0,1\}^{L \times \Sigma}$ by setting:
\begin{gather*}
x(a) \triangleq \bigcirc_{\ell \in L} \bigcirc_{\sigma \in \Sigma} \bigvee_{k \in K}[a_{\ell,k} = \sigma].
\end{gather*}
Similarly, for each $b \in B \subset \Sigma^{L}$, construct a binary vector $y(b) \in \{-1,0\}^{L \times \Sigma}$ by setting:
\begin{gather*}
y(b) \triangleq -\bigcirc_{\ell \in L} \bigcirc_{\sigma \in \Sigma} [b_{\ell} = \sigma].
\end{gather*}

For any pair $x(a), y(b)$ and $\ell \in L$, we have that the $\infty$-norm distance between $x(a)_{\ell,\cdot}$ and $y(b)_{\ell, \cdot}$ is $2$ if there is some $k$ such that $a_{\ell,k} = b_{\ell}$, and $1$ otherwise.
Therefore, summing over $\ell \in L$ we have that
$$
\Delta_{2,\infty}\big(x(a),y(b)\big) = (1+s(a,b))\cdot \ell.
$$
In particular, if $(A,B)$ is a yes case of {\sc PCP-Vectors}, there are $x(a^*),y(b^*)$ such that $\Delta_{2,\infty}\big(x(a^*),y(b^*)\big) = 2\ell$; given a no instance, $\Delta_{2,\infty}\big(x(a),y(b)\big) = (1+o(1))\ell$ for any $x(a),y(b)$. 

Finally, observe that any two $x$-vectors have $\Delta_{2,\infty}$-distance at most $\ell$ since all the entries have the same sign; similarly for any two $y$-vectors.

\end{proof}

\section{Erratum}
\label{sec:erratum}

In a previous version of this paper, we erroneously claimed hardness of approximation results on symmetric, or ``monochromatic'' variants of {\sc PCP-Vectors} and {\sc Max-IP}, as well as exact hardness for Closest Pair problems in Hamming, Manhattan, and Euclidean metrics. We later discovered a mistake in that part of the proof. Unfortunately, it was too late to update the extended abstract that will appear in the proceedings of FOCS 2017~\cite{ARW17-proceedings}.

\fi

\section*{Acknowledgements}
We thank Karthik C.S., Alessandro Chiesa, S{\o}ren Dahlgaard, Piotr Indyk, Rasmus Pagh, Ilya Razenshteyn, Omer Reingold, Nick Spooner, Virginia Vassilevska Williams, Ameya Velingker, and anonymous reviewers for helpful discussions and suggestions.

This work was done in part at the Simons Institute for the Theory of Computing.
We are also grateful to the organizers of Dagstuhl Seminar 16451 for a special collaboration opportunity.

\ifFULL

\else
A.A. was supported by the grants of Virginia Vassilevska Williams: NSF Grants CCF-1417238, CCF-1528078 and CCF-1514339, and BSF Grant BSF:2012338.
A.R. was supported by Microsoft Research PhD Fellowship, NSF grant CCF-1408635 and by Templeton Foundation grant 3966. 
R.W. was supported by an NSF CAREER grant (CCF-1741615).
\fi

\ifFULL
\bibliographystyle{amsalpha}
\else
\bibliographystyle{IEEEtran}
\fi
\bibliography{bib}

\end{document}